\title{Dual Polynomials for Collision and Element Distinctness}
\date{}
\author{Mark Bun\thanks{Harvard University, School of Engineering and Applied Sciences. Supported by an NDSEG Fellowship and NSF grant CNS-1237235.}\\ \texttt{mbun@seas.harvard.edu}
   \and 
Justin Thaler\thanks{Yahoo Labs. Parts of this work were performed while the author was a Research Fellow at the Simons Institute for the Theory of Computing. Supported in part by a Research Fellowship from the Simons Institute for the Theory of Computing.}   \\ \texttt{jthaler@fas.harvard.edu}
}
\newcommand{\EQ}{\mathrm{EQ}}
\newcommand{\NEQ}{\mathrm{NEQ}}
\newcommand{\R}{\mathbb{R}}
\newcommand{\E}{\mathbb{E}}
\newcommand{\bits}{\{-1,1\}}
\newcommand{\eat}[1]{}
\newcommand{\ED}{\operatorname{ED}}
\newcommand{\OR}{\operatorname{OR}}
\newcommand{\MAJ}{\operatorname{MAJ}}
\newcommand{\col}{\operatorname{Col}}
\newcommand{\ed}{\operatorname{ED}}
\newcommand{\ext}{\operatorname{ext}}
\newcommand{\cross}{\mathsf{cross}}
\newcommand{\edit}[1]{{#1}}
\newcommand{\ignore}[1]{}
\newcommand{\provisionallyremove}[1]{}
\newcommand{\eps}{\varepsilon}
\newcommand{\acz}{AC$^{\mbox{0}}$}
\renewcommand{\eqref}[1]{Eq.~(\ref{#1})}
\newcommand{\adeg}{\widetilde{\operatorname{deg}}}
\newtheorem{theorem}{Theorem}
\newtheorem{lemma}[theorem]{Lemma}
\newtheorem{proposition}[theorem]{Proposition}
\newtheorem{corollary}[theorem]{Corollary}
\newtheorem{definition}[theorem]{Definition}
\newtheorem{remark}[theorem]{Remark}
\begin{document}

\maketitle
 \thispagestyle{empty}
\begin{abstract}
The approximate degree of a Boolean function $f: \{-1, 1\}^n \to \{-1, 1\}$ is the minimum degree of a real polynomial that approximates $f$ to within error $1/3$ in the $\ell_\infty$ norm.
In an influential result, Aaronson and Shi (J. ACM 2004) proved tight $\tilde{\Omega}(n^{1/3})$ and $\tilde{\Omega}(n^{2/3})$ lower bounds on the approximate degree of the Collision and Element Distinctness functions, respectively.
Their proof was non-constructive, using a sophisticated symmetrization argument and tools from approximation theory. 

More recently, several open problems in the study of approximate degree have been resolved via the construction of dual polynomials.
These are explicit dual solutions 
to an appropriate
linear program that captures the approximate degree of any function. 
We reprove Aaronson and Shi's results by constructing explicit dual polynomials for the Collision and Element Distinctness functions. 
\end{abstract}

\newpage
\addtocounter{page}{-1}

\section{Introduction} 
The $\eps$-approximate degree of a Boolean function $f:\{-1, 1\}^n \to \{-1, 1\}$ is the least degree of a real polynomial that approximates $f$ to within error $\eps$ in the $\ell_\infty$ norm. Approximate degree is a fundamental measure of the complexity of a Boolean function, and has wide-ranging applications in theoretical computer science. For example, approximate degree upper bounds
underly several of the best known algorithms for PAC learning \cite{ksdnf}, agnostic learning \cite{agnostic, reliable}, learning in the presence of irrelevant information \cite{klivansservedioomb, stt}, and differentially private data release \cite{tuv, ctuw}. 
Meanwhile, lower bounds on approximate degree imply many optimal lower bounds on quantum query complexity, circuit complexity, and communication complexity (see for example \cite{beals, sherstovmajmaj, aaronsonshi, bvdw, shizhu, patternmatrix, beigel, beigelsurvey, sherstovsurvey}).

In an influential result, Aaronson and Shi proved tight $\tilde{\Omega}(n^{1/3})$ and $\tilde{\Omega}(n^{2/3})$ lower bounds on the approximate degree of the Collision and Element Distinctness functions \cite{aaronsonshi}.\footnote{Aaronson established a lower bound of $\tilde{\Omega}(n^{1/5})$ for the Collision function in a paper that appeared in STOC 2002 \cite{aaronson}, and Shi improved it to the tight $\tilde{\Omega}(n^{1/3})$ in a FOCS paper that same year \cite{shi}. A joint journal paper appeared in 2004 \cite{aaronsonshi}. The proof was simplified and extended to the ``small range'' case by Kutin \cite{kutin}. Ambainis  \cite{ambainis} independently extended Aaronson and Shi's lower bound to the small range case, using different techniques than Kutin.} The Collision lower bound matched 
an earlier $O(n^{1/3})$ upper bound due to Brassard et al. \cite{brassard}, while the lower bound for Element Distinctness was later shown to be tight by Ambainis \cite{ambainised}. 

The Collision lower bound subsequently found many applications and extensions in quantum complexity theory; Aaronson recently provided a retrospective overview of these developments \cite{retrospective}. Moreover, the $\tilde{\Omega}(n^{2/3})$ lower bound for Element Distinctness
remains the best known approximate degree lower bound for any function in \acz.

Aaronson and Shi proved their lower bound for Collision with a symmetrization argument. This style of argument proceeds in two steps. First, a polynomial $p$ on $n$ variables (which is assumed to
approximate the target function $f$) is transformed into a polynomial $q$ on $m < n$ variables
in such a way that $\deg(q) \leq \deg(p)$. Second, a lower bound on $\deg(q)$ is proved, typically by applying Markov-Bernstein type inequalities from approximation theory. 
Aaronson and Shi's proof of the Collision lower bound is a particularly sophisticated application of this style of argument.

The lower bound for Element Distinctness follows from a reduction to the lower bound for Collision. This reduction is discussed in Section \ref{sec:ed}.

\medskip
\noindent \textbf{The Method of Dual Polynomials.} 
Despite the many applications of approximate degree in theoretical computer science, significant gaps remain in our understanding of this complexity measure,
and there are many simple functions whose approximate degree
remains unknown. The slow nature of progress can be attributed in part to the limitations of symmetrization arguments. 
At an intuitive level, the process of symmetrization is inherently lossy: by turning a polynomial $p$ on $n$ variables into a polynomial $q$ on $m < n$ variables,
information about $p$ is necessarily thrown away. Hence, several works have identified that an important research direction is to develop techniques beyond symmetrization for lower bounding the approximate degree of Boolean functions \cite{aaronsontutorial, sherstovfocs, bt13}.

 

The last few years have seen significant progress toward this goal. In particular, a series of works has
proved new approximate degree lower bounds for important classes of functions
by constructing explicit \emph{dual polynomials},
which are dual solutions to a certain linear program capturing the approximate degree of any
function. These polynomials act as certificates of the high approximate degree of a function. Moreover,
strong LP duality
implies that the technique is lossless, in contrast to symmetrization. That is, for any function $f$ and any $\eps$, there is always some dual polynomial
$\phi$ that witnesses a tight approximate degree lower bound for $f$; the challenge is to construct $\phi$.

This ``method of dual polynomials'' was recently used to resolve the approximate degree of the AND-OR tree \cite{sherstov13, bt13}, closing a long
line of incrementally larger lower bounds \cite{shi, ambainis, rootn, sherstovfocs, nisanszegedy}. It has also been used to establish several ``hardness amplification''
results for approximate degree \cite{thaler, bt14, sherstov14}, and to prove new \emph{threshold degree} lower bounds for several
important classes of functions, including the intersection of two majorities \cite{odonnellservedio, sherstovfocs} and \acz\ \cite{sherstov14}. The latter result represented
the first superlogarithmic improvement over Minsky and Papert's seminal $\Omega(n^{1/3})$ lower bound from 1969 on the threshold degree of an \acz\ function.
We also note that dual polynomials have recently been used to resolve several longstanding open problems in communication complexity, where they yield explicit
distributions under which various communication problems are hard (see the survey of Sherstov \cite{sherstovsurvey}).

\paragraph{Contribution and motivation.}
We reprove Aaronson and Shi's results by constructing explicit dual polynomials for the Collision and Element Distinctness functions.\footnote{Like Kutin's simplification and refinement
of Aaronson and Shi's original proof of the Collision lower bound, our construction yields a dual polynomial for the Collision function
even in the ``small-range'' case.} First, we give a direct construction of a dual polynomial for Collision. In Section \ref{sec:overview}, we give an overview of the ideas that go into this construction. We then show how to turn any dual polynomial $\psi$ for Collision into a dual polynomial $\varphi$ for Element Distinctness. We construct $\varphi(x)$ by averaging $\psi(y)$ over a carefully constructed set of extensions from each $x$ to a longer input $y$.

We have two main motivations for reproving Aaronson and Shi's lower bound in this manner.
First, only a handful of techniques are currently known for the construction of dual polynomials, especially
for the case where $\eps=\Theta(1)$. To date, dual polynomials have been constructed only
for symmetric functions \cite{spalek, bt13} and a handful of highly structured block-composed functions  \cite{bt14, bt13, sherstov13, sherstovfocs, sherstov14, comm5} (a block-composed
function $F \colon \{-1, 1\}^{M \cdot N} \rightarrow \{-1, 1\}$ is a function of the form
of the form $F=g(f(x_1), \dots, f(x_M))$ for some $g \colon \{-1, 1\}^M \rightarrow \{-1, 1\}$
and $f \colon \{-1, 1\}^N \rightarrow \{-1, 1\}$). The Collision and Element
Distinctness functions fall into neither category; 
our constructions of dual polynomials for these problems introduce several new techniques that we hope will prove useful
in future applications. 

A second motivation is to shed new light on the Collision lower bound itself. 
The earlier symmetrization-based proof \cite{aaronsonshi, kutin}, while shorter than ours, is non-constructive and relies on Markov-Bernstein inequalities from approximation theory.
In contrast, our proof is constructive and entirely
elementary.
We also believe that our analysis illuminates some of the more miraculous aspects of the earlier symmetrization-based proof -- see Section \ref{sec:discussion} for further
discussion of this point.

\paragraph{Related work on quantum query complexity.} 
Aaronson and Shi's original motivation for studying the approximate degree of the Collision function was to understand its quantum query complexity (recall that approximate degree provides 
a \emph{lower bound} on quantum query complexity \cite{beals}. However, it is known that the lower bound is not always tight \cite{qqcpolyseparation}). Subsequent to Aaronson and Shi's work, other methods were developed for quantum query complexity \cite{adversary1, adversary2, adversary3, qqcpolyseparation, adversary4, adversary5},
and it is now known that one of these methods, called the \emph{negative-weights adversary method} \cite{adversary1}, is always tight. 

The negative-weights adversary method for lower bounding quantum query complexity is closely analogous to the method of dual polynomials for approximate degree: 
the former is characterized by a semidefinite program, and a solution to this semidefinite program is known as an \emph{adversary matrix}. 
A recent line of work, similar in spirit to our own, has proved or reproved optimal quantum query complexity lower bounds for several functions by constructing explicit adversary matrices.
In particular, Belovs and Rosmanis \cite{belovsrosmanis} constructed an optimal adversary matrix for the Collision function in the ``large range'' case (note that the
dual polynomial that we construct applies even in the ``small range'' case), and Belovs and \v{S}palek constructed an optimal adversary matrix for the Element Distinctness function \cite{belovsspalek}. 

Very recently, Zhandry \cite{zhandry} (improving on work of Yuen \cite{yuen}) has also proved a tight lower bound of $\Omega(N^{1/3})$ on the quantum query complexity of finding a collision in a randomly chosen function.






\section{Preliminaries}

\subsection{Notation}
For any positive integer $n$, we denote the set $\{1, \dots, n\}$ by $[n]$, and the set $\{0, 1, \dots, n\}$ by $[[n]]$. \edit{For a function $f : D \to \R$, define the $L_1$ norm $\|f\|_1 = \sum_{x \in D} |f(x)|$.} For any subset $S \subseteq [n]$,
we let $\chi_S \colon \{-1, 1\}^n \rightarrow \{-1, 1\}$ denote the parity function on $S$, i.e., $\chi_S(x) = \prod_{i \in S} x_i$.

\subsection{Approximate Degree and its Dual Characterization}
\label{sec:dualdefs} \label{sec:prelims}
Let $D \subseteq \{-1, 1\}^n$, and let $f \colon D \rightarrow \{-1, 1\}$ be a partial Boolean function
defined on $D$. A real polynomial $p \colon \{-1, 1\}^n \rightarrow \{-1, 1\}$ is said to $\eps$-approximate $f$
if 

\begin{enumerate}
\item $|p(x) - f(x)| \leq \eps$ for all $x \in D$, and
\item $|p(x)| \le 1 + \eps$ for all $x \in \{-1, 1\}^n$.
\end{enumerate}

The $\eps$-approximate degree of $f$, denoted $\adeg_{\eps}(f)$, is the minimum degree of an $\eps$-approximation for $f$. 
We use $\adeg(f)$ to denote $\adeg_{1/3}(f)$, and refer to this quantity without qualification as the \emph{approximate degree} of $f$. The choice of $1/3$ is arbitrary, as $\widetilde{\deg}(f)$ is related to $\adeg_\eps(f)$ by a constant factor for any constant $\eps \in (0, 1)$. 

Given a partial Boolean function $f$, let $p$ be a real polynomial that attains the smallest $\eps$ subject to the constraints above, over
all polynomials of degree at most $d$. Since we work over $x \in \{-1, 1\}^n$, we may assume without loss of generality that $p$ is multilinear with the representation $p(x) = \sum_{|S| \leq d} c_S \chi_S(x)$, where the coefficients $c_S$ are real numbers. Then $p$ is an optimum of the following linear program.

\[ \boxed{\begin{array}{lll} 
    \text{min}  &     \eps    \\
    \mbox{such that} &\Big|f(x) - \sum_{|S| \leq d} c_S \chi_S(x)\Big| \leq \eps & \text{ for each } x \in D\\
    &\Big|\sum_{|S| \leq d} c_S \chi_S(x)\Big| \leq 1 + \eps & \text{ for each } x \in \{-1, 1\}^n \setminus D\\
    &c_S \in \mathbb{R} & \text{ for each } |S| \leq d\\
    &\eps \geq 0
    \end{array}}
\]

The dual linear program is as follows.

\[ \boxed{\begin{array}{lll} 
    \text{max} &    \sum_{x \in D} \phi(x) f(x)  - \sum_{x \in \{-1, 1\}^n \setminus D} |\phi(x)| \\
    \mbox{such that} &\sum_{x \in \{-1, 1\}^n} |\phi(x)| = 1\\
    &\sum_{x \in \{-1, 1\}^n} \phi(x) \chi_S(x)=0  & \text{ for each } |S| \leq d\\
    &\phi(x) \in \mathbb{R} & \text{ for each } x \in \{-1, 1\}^n
    \end{array}}
\]

Strong LP-duality thus implies the following dual characterization of approximate degree:

\begin{theorem} \label{thm:prelim} Let $f: D \to \{-1, 1\}$ be a partial Boolean function. Then $\adeg_\eps(f) > d$ if and only if there is a polynomial $\phi \colon \{-1, 1\}^n \rightarrow \mathbb{R}$ such that 
\begin{equation} \label{eq:prelim0} \sum_{x \in D} f(x) \phi(x) - \sum_{x \in \{-1, 1\}^n \setminus D} |\phi(x)| > \eps \cdot \sum_{x \in \bits^n} |\phi(x)|, \end{equation}
and
\begin{equation} \label{eq:prelim2} \sum_{x \in \{-1, 1\}^n} \phi(x) \chi_S(x)=0   \text{ for each } |S| \leq d.\end{equation}
\end{theorem}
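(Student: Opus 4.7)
The plan is to derive Theorem~\ref{thm:prelim} as a direct consequence of strong LP duality applied to the primal LP displayed just above the theorem. First, I would put the primal in a fully explicit linear form: the constraint $|f(x) - \sum_S c_S \chi_S(x)| \leq \epsilon$ for $x \in D$ splits into the two linear inequalities
\[
f(x) - \textstyle\sum_S c_S \chi_S(x) - \epsilon \leq 0, \qquad -f(x) + \sum_S c_S \chi_S(x) - \epsilon \leq 0,
\]
and similarly the constraint on $x \in \{-1,1\}^n \setminus D$ splits into two inequalities $\pm\sum_S c_S \chi_S(x) - \epsilon \leq 1$. I would then introduce nonnegative dual variables $\alpha_x^+, \alpha_x^-$ for the two inequalities at each $x \in D$ and $\beta_x^+, \beta_x^-$ at each $x \notin D$.

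Second, I would form the Lagrangian and compute the dual using standard rules. Setting the coefficient of each free variable $c_S$ to zero gives $\sum_{x \in D} (\alpha_x^- - \alpha_x^+)\chi_S(x) + \sum_{x \notin D}(\beta_x^+ - \beta_x^-)\chi_S(x) = 0$ for every $|S| \leq d$, and setting the coefficient of $\epsilon$ to match the primal objective gives $\sum_{x \in D}(\alpha_x^+ + \alpha_x^-) + \sum_{x \notin D}(\beta_x^+ + \beta_x^-) = 1$. The key collapse is to define a single signed function $\phi$ by $\phi(x) = \alpha_x^- - \alpha_x^+$ for $x \in D$ and $\phi(x) = \beta_x^+ - \beta_x^-$ for $x \notin D$. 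At any optimum, we may assume $\alpha_x^+ \alpha_x^- = 0$ and $\beta_x^+ \beta_x^- = 0$ (otherwise subtract the smaller from both, which only slackens feasibility and cannot hurt the objective), so the normalization becomes $\sum_x |\phi(x)| = 1$, the orthogonality constraint becomes $\sum_x \phi(x) \chi_S(x) = 0$ for $|S| \leq d$, and the dual objective simplifies to $\sum_{x \in D} f(x) \phi(x) - \sum_{x \notin D} |\phi(x)|$, matching the dual LP as written in the paper.

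Third, I would appeal to strong LP duality: the primal is feasible (take $c_S = 0$ and $\epsilon$ large) and bounded below by $0$, so both the primal and dual attain the same finite optimum $\epsilon^\star$. By definition, $\adeg_\epsilon(f) > d$ means no degree-at-most-$d$ multilinear polynomial $\epsilon$-approximates $f$, i.e., $\epsilon^\star > \epsilon$. Equivalently, there is a feasible dual solution $\phi$ with $\sum_x |\phi(x)| = 1$ and objective value strictly greater than $\epsilon$. Removing the normalization by homogeneity (scaling $\phi$ by any positive constant preserves the orthogonality constraint and scales both sides of the objective inequality equally) gives exactly the condition in \eqref{eq:prelim0} together with \eqref{eq:prelim2}. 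Conversely, any $\phi$ satisfying these two conditions, after rescaling so that $\sum_x |\phi(x)| = 1$, is dual-feasible with objective exceeding $\epsilon$, so by weak duality $\epsilon^\star > \epsilon$ and $\adeg_\epsilon(f) > d$.

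The main obstacle, and really the only nontrivial step, is the collapse of the four dual variables $\alpha_x^\pm, \beta_x^\pm$ into a single signed $\phi$ and the verification that at an optimum the $L_1$ mass of $\phi$ coincides with the normalization constraint. Once this bookkeeping is done carefully, strong duality and homogeneity deliver the equivalence immediately.
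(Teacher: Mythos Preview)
Your proposal is correct and follows exactly the approach the paper takes: the paper simply writes down the primal and dual LPs and asserts that ``strong LP-duality thus implies'' the theorem, without spelling out the derivation. You have filled in the standard bookkeeping (splitting absolute-value constraints, collapsing the $\alpha_x^{\pm},\beta_x^{\pm}$ into a single signed $\phi$, and invoking homogeneity to drop the normalization), which is precisely what the paper leaves to the reader; one small point is that since $\eps\ge 0$ in the primal, the normalization constraint in the dual is a priori $\sum_x|\phi(x)|\le 1$ rather than $=1$, but equality holds at the optimum and is in any case irrelevant once you pass to the homogeneous form \eqref{eq:prelim0}.
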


If $\phi$ satisfies \eqref{eq:prelim0}, we say that $\phi$ has \emph{correlation greater than $\eps$} with $f$.
If $\phi$ satisfies \eqref{eq:prelim2}, we say $\phi$ has \emph{pure high degree} $d$.
We refer to any feasible solution $\phi$ to the dual linear program as an $(\eps, d)$-\emph{dual polynomial} for $f$.

\subsection{The Collision and Element Distinctness Functions}

Let $[N]=\{1, \dots, N\}$, and fix a triple of positive integers $n, N, R$ such that $R \geq N$, and
$n = N \cdot \log_2 R$. For simplicity throughout, we assume that $R$ is a power of 2. The Collision and Element Distinctness functions are typically thought of as 
\emph{properties} of functions mapping $[N]$ to $[R]$. However, it will be convenient for us to think of them instead
as functions on the Boolean hypercube $\{-1, 1\}^n$. 
To this end, given an input $x \in \{-1, 1\}^n$, we interpret $x$ as the evaluations of a function $g_x$ mapping $[N] \rightarrow [R]$. That is, we break $x$ up into $N$ blocks, each of length $\lceil \log_2 R \rceil$, and regard each block $x_i$ as the binary representation of $g_x(i)$. 

\begin{definition}[Collision Function]
A function $g_x \colon [N] \rightarrow [R]$ is said to be $k$-to-1 if for every $i \in [N]$, there exists exactly $k-1$ values $j \neq i$ such that $g_x(i) = g_x(j)$. 
Let $T_k := \{x \in \{-1, 1\}^n : g_x \text{ is } k\text{-to-}1\}$ (clearly, $T_k$ is non-empty only if $k | N$).
The Collision function,  which we denote by $\col_{N, R}$, is the partial Boolean function defined on $T_1 \cup T_2 \subseteq \{-1, 1\}^n$
such that $\col_{N, R}(x)=1$ if and only if $x \in T_1$. That is, $\col_{N, R}$ is the partial Boolean function corresponding to the property that $g_x$ is a 1-to-1 function, with the promise that $g_x$ is either 1-to-1 or 2-to-1.
\end{definition}

\begin{definition}[Element Distinctness Function]
The \emph{Element Distinctness function}, denoted $\ed_{N, R}$, is the total Boolean function defined such that $\ed_{N,R}(x)=1$ if and only if $g_x$ is 1-to-1.
That is,  $\ed_{N,R}$ is the total Boolean function corresponding to the property that $g_x$ is 1-to-1.
\end{definition}

Let $B \subset \{-1, 1\}^n$ denote the set of inputs $x$ such that $g_x$ is neither 1-to-1 nor 2-to-1. Then 
an $(\eps, d)$-dual polynomial $\phi$ for $\col_{N, R}$ has the following properties (cf. Section \ref{sec:dualdefs}):

\begin{enumerate}
\item $\sum_{x \in T_1} \phi(x) - \sum_{x \in T_2} \phi(x) - \sum_{x \in B} |\phi(x)| > \eps \cdot \sum_{x \in \bits^n} |\phi(x)|$.
\item $\sum_{x \in \bits^n} \phi(x) \chi_S(x) = 0 \mbox{ for all } |S|\leq d$.
\end{enumerate}

Similarly, an $(\eps, d)$-dual polynomial for $\ed_{N,R}$ satisfies:

\begin{enumerate}
\item $\sum_{x \in T_1} \phi(x) - \sum_{x \notin T_1} \phi(x) > \eps \cdot \sum_{x \in \bits^n} |\phi(x)|$.
\item $\sum_{x \in \bits^n} \phi(x) \chi_S(x) = 0 \mbox{ for all } |S|\leq d$.
\end{enumerate}

\subsection{Overview of the Symmetrization-Based Proof of the Collision Lower Bound}
\label{sec:kutin}
Kutin's simplified proof of the Collision lower bound \cite{kutin} proceeds in two steps. The first step is a symmetrization step, which
establishes the following remarkable result (we state this result slightly informally in this overview).

\begin{lemma}[Informal version of Lemma \ref{lem:symmetrization}] \label{lem:informal} Call a triple $(m, a, b)$ \emph{valid} if $a | m$ and $b|(N-m)$. 
For any
triple $(m, a, b)$, let $R_{m, a, b}$ denote the set of inputs $x \in \{-1, 1\}^n$ such that
$g_x \colon [N] \rightarrow [R]$ maps $m$ of its inputs to $[R]$ in an $a$-to-1 manner, and maps the remaining $N-m$ of its inputs to $[R]$
in a $b$-to-1 manner. Then there is a trivariate polynomial $P$ of total degree at most $d$ such that for every valid triple 
$(m, a, b)$, it holds that $P(m, a, b)=\E_{x \in R_{m, a, b}}[p(x)]$.
\end{lemma}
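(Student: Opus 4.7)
The plan is a two-stage symmetrization exploiting the natural action of $G := S_N \times S_R$ on $\{-1,1\}^n \cong [R]^N$: $S_N$ permutes the $N$ input blocks, and $S_R$ relabels the $R$ possible output values. The critical observation is that for each valid triple $(m,a,b)$ with $a \neq b$, the set $R_{m,a,b}$ is a single $G$-orbit; hence replacing $p$ by its $G$-average $\bar p$ produces a function that equals $\E_{x \in R_{m,a,b}}[p(x)]$ at every point of $R_{m,a,b}$. It then remains to show $\bar p$ restricted to $R_{m,a,b}$ is the value at $(m,a,b)$ of a single trivariate polynomial of total degree at most $d$.

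First I would form $\tilde p(x) := \E_{\sigma \in S_N}[p(\sigma \cdot x)]$, which has degree at most $d$ in the bits of $x$ (since $S_N$ permutes blocks of bits) and is $S_N$-invariant. Re-expressing $\tilde p$ in the indicator basis on $[R]^N$, any bit-degree-$d$ monomial is a linear combination of products $\prod_{\ell=1}^k \mathbf{1}[g_x(i_\ell) = j_\ell]$ involving $k \leq d$ distinct blocks $i_\ell$. A further average $\bar p(x) := \E_{\tau \in S_R}[\tilde p(\tau \cdot x)]$ collapses each such product into its equality pattern: for a set partition $\Pi$ of $[k]$, let $T_\Pi(x)$ count ordered tuples $(i_1,\dots,i_k)$ of distinct elements of $[N]$ whose $g_x$-values realize exactly the pattern $\Pi$. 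Then $\bar p = \sum_{k \leq d} \sum_\Pi c_\Pi T_\Pi$ for some scalar coefficients $c_\Pi$.

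The heart of the argument is to evaluate $T_\Pi$ on $R_{m,a,b}$ and show the result is a polynomial of total degree at most $k$ in $(m,a,b)$. If $\Pi$ has parts of sizes $k_1,\dots,k_s$ with $\sum_j k_j = k$, computing $T_\Pi(x)$ amounts to assigning each part of $\Pi$ to a distinct cell of $g_x$ (a preimage of a single value) and choosing an ordered $k_j$-tuple within that cell. Summing over partitions of the $s$ parts into a subset $A$ mapped to $a$-cells and $B = \Pi \setminus A$ mapped to $b$-cells, each summand equals
$$\Bigl(\tfrac{m}{a}\Bigr)^{\underline{|A|}} \prod_{j \in A} a^{\underline{k_j}} \cdot \Bigl(\tfrac{N-m}{b}\Bigr)^{\underline{|B|}} \prod_{j \in B} b^{\underline{k_j}}.$$
Writing $(m/a)^{\underline{r}} = m(m-a)(m-2a)\cdots(m-(r-1)a)/a^r$, the $a^r$ in the denominator is absorbed by $r$ of the compensating factors of $a$ contributed by the falling factorials $a^{\underline{k_j}}$ (each of which begins with a factor $a$ because $k_j \geq 1$). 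The net $(m,a)$-degree of the $A$-portion is $\sum_{j \in A} k_j$, and the analogous calculation gives $(m,b)$-degree $\sum_{j \in B} k_j$ for the $B$-portion, for a total degree of $k \leq d$.

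The main obstacle is precisely this denominator cancellation: every $1/a$ from $(m/a)^{\underline{r}}$ must pair with a compensating $a$ from some $a^{\underline{k_j}}$, and similarly for $b$. The argument works because every part of every partition has size at least one, which supplies one compensating factor per cell used. Assembling the $T_\Pi$ contributions across all $\Pi$ with $k \leq d$ then yields a single trivariate polynomial $P(m,a,b)$ of total degree at most $d$ that agrees with $\E_{x \in R_{m,a,b}}[p(x)]$ on every valid triple. The degenerate case $a=b$ requires no separate treatment: all valid triples $(m,a,a)$ label the same orbit, and the formula above gives a value that does not actually depend on the nominal split of $N$ into $m$ and $N-m$.
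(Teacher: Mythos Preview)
Your proposal is correct and follows the same two-step structure the paper uses: pass to the indicator-variable representation (the paper's Lemma~\ref{lem:representation}) and then invoke Kutin's symmetrization (the paper's Lemma~\ref{lem:kutin}). The paper does not actually reprove Kutin's lemma but simply cites it; your sketch of the falling-factorial calculation --- assigning parts of the pattern $\Pi$ to $a$-cells and $b$-cells and cancelling the $a^{|A|}$ (resp.\ $b^{|B|}$) denominator of $(m/a)^{\underline{|A|}}$ against one leading factor of $a$ (resp.\ $b$) from each $a^{\underline{k_j}}$ --- is exactly how Kutin's argument runs, so you have in effect gone one step further than the paper itself.
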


Note in the above lemma that the sets $R_{m, a, b}$ are not uniquely determined; for instance $R_{m, 1, 1} = R_{0, a, 1} = R_{N, 1, b} = T_1$ for every triple $(m, a, b)$.

The second step of Kutin's proof argues that if $p$ is a $(1/3)$-approximating polynomial for the Collision function, then $P$ 
must have degree $\Omega(N^{1/3})$. Hence by Lemma \ref{lem:informal}, $p$ must have degree $\Omega(N^{1/3})$ as well. 

In more detail, the second step of Kutin's proof proceeds via a case analysis. Four cases are considered. 
\begin{itemize}
\item
The first is: 
$P(N/2, 1, 2) \geq 1/2$, and $|P(N/2, 1, b)| \leq 2$  for all $b \in [N^{2/3}]$. In this case, Kutin is able to apply Markov's inequality
from approximation theory to conclude that the degree of $P$ in its third variable is $\Omega(N^{1/3})$.

\item The second is: 
$P(N/2, 1, 2) \geq 1/2$, and $|P(N/2, 1, b)| > 2$  for some $b \in [N^{2/3}]$. In this case, Kutin is able to apply Bernstein's inequality
from approximation theory to conclude that the degree of $P$ in its first variable is $\Omega(N^{1/3})$.

\item The third is: 
$P(N/2, 1, 2) < 1/2$, and $|P(N/2, a, 2)| \leq 2$  for all $a \in [N^{2/3}]$. In this case, Kutin is able to apply Markov's inequality
 to conclude that the degree of $P$ in its second variable is $\Omega(N^{1/3})$.

\item The fourth is: 
$P(N/2, 1, 2) < 1/2$, and $|P(N/2, a, 2)| > 2$  for some $a \in [N^{2/3}]$. In this case, Kutin is able to apply Bernstein's inequality
 to conclude that the degree of $P$ in its first variable is $\Omega(N^{1/3})$.
\end{itemize}
A key technical complication that must be 
dealt with in the argument above is that $|P(m, a, b)|$ may be much larger than 1 for \emph{invalid} triples $(m, a, b)$. 
This may seem like a minor technicality, but in fact it is a central issue: if $P(m, a, b)$ were bounded for all invalid triples,
then it would be possible to argue that the total degree of $P$ is $\Omega(N^{1/2})$, which would imply a (false) lower bound of $\Omega(N^{1/2})$
on the approximate degree of $\col_{N, R}$.

\subsection{Overview of Our Construction for the Collision Function}
\label{sec:overview}
Like Kutin's proof, our construction also makes essential use of Lemma \ref{lem:informal}. Whereas Kutin used Lemma \ref{lem:informal}
to reduce to a setting where Markov-Bernstein inequalities could be applied in a non-constructive manner, we instead use Lemma \ref{lem:informal}
to argue that the dual polynomial $\phi$ that we construct has pure high degree $\Omega(N^{1/3})$. 

In more detail, we present our construction in two stages, in order to highlight distinct ideas that go into the proof. 
In the first stage, we construct a simpler dual polynomial $\phi \colon \{-1, 1\}^n \rightarrow \{-1, 1\}$ that exhibits an $\Omega(\sqrt{\log N/\log\log N})$ 
lower bound on the approximate degree of $\col_{N, R}$. The second stage constructs a dual polynomial $\psi$ exhibiting the optimal
$\Omega(N^{1/3})$ lower bound.

\paragraph{Overview of the first stage.} 
Let $H_k\subseteq \{-1, 1\}^n$ denote the set of inputs of Hamming weight $n$. The symmetrization-based proof of the Collision lower bound from \cite{aaronsonshi, kutin} carries the strong intuition that
the sets $T_k$ should play the same role that $H_k$ plays in Nisan and Szegedy's seminal symmetrization-based lower bound for the $\OR$ function \cite{nisanszegedy}. We direct
the interested reader to Aaronson's lecture notes \cite{lecturenotes} for a detailed explanation of this intuition. The construction of our simpler dual witness $\phi$ instantiates this intuition in the dual setting. 

Recall that a dual polynomial $\phi$ witnessing the fact that $\adeg(\col_{N, R}) \geq d$ must satisfy two properties: (1) it must
have correlation greater than $\eps$ with $\col_{N, R}$, and (2) it must have pure high degree at least $d$.
We define $\phi$ in a way that mimics the structure of known dual witnesses for symmetric functions, even though $\phi$ is not itself symmetric. 
Specifically, our construction ensures that the analysis establishing Properties (1) and (2) becomes similar to the analyses of known dual polynomials for
the $\OR$ function \cite{spalek, bt13}.

In more detail, our prior work \cite{bt13} built on work of \v{S}palek \cite{spalek} to give a dual witness $\gamma$ for the fact that $\adeg_{\eps}(\OR_n)=\Omega(\sqrt{n})$ for any constant $\eps < 1$; moreover,
$\gamma$ places non-zero weight only on sets $H_k$, for values of $k$ equal (up to scaling factors)
to perfect squares. The pure high degree of $\gamma$ is shown to be equal to (at least) the number of sets $H_k$ upon which $\gamma$ places non-zero weight. 

Call an input $x \in \{-1, 1\}^n$ \emph{valid} if it is in $R_{m, a, b}$ for some valid triple $(m, a, b)$.
By analogy with $\gamma$, the dual witness $\phi$ that we construct in Stage 1 places weight only on inputs $x \in T_k$ for divisors $k$ of $N$ that are also (up to scaling factors) perfect squares.
In particular, our definition of $\phi$ ensures that:
\begin{equation} \phi(x)=0 \text{ for all invalid inputs } x. \label{whoop} \end{equation} 

We are able to combine \eqref{whoop} with Lemma \ref{lem:informal} and a basic combinatorial identity (cf. Lemma \ref{lem:combinatorial}) to show that the pure high degree of $\phi$ is at least $|S|$,
where $S$ denotes the set of $T_k$'s upon which $\phi$ places non-zero weight.
Moreover, our definition of $\phi$ is carefully chosen to ensure that its correlation with $\col_{N, R}$ is large: the precise calculation is closely analogous to 
the analysis from \cite{spalek, bt13} showing that $\gamma$ is well-correlated with the $\OR$ function \cite{spalek, bt13}.

\paragraph{Overview of the second stage.}  In the second stage, we construct a dual polynomial $\psi$ that exhibits the optimal $\Omega(N^{1/3})$ lower bound. Rather than only weighting inputs in $T_k$ for some some divisors $k$ of $n$, $\psi$ weights inputs in $R_{m, a, b}$ for many valid triples $(m, a, b)$. There are two key ideas that go into the construction of $\psi$. 

The first idea is to define $\psi$ as the sum of two simpler dual polynomials $\psi_1$ and $\psi_2$, each with pure high degree $\Omega(N^{1/3})$ -- then the sum $\psi$ also has pure high degree $\Omega(N^{1/3})$ (see Lemma \ref{lem:phd-sum}). The first polynomial $\psi_1$ places a large constant fraction (close to $1/2$) 
of its $L_1$ mass on $T_1$, whereas $\psi_2$ places a large constant fraction of its $L_1$ mass on $T_2$. 
Neither $\psi_1$ nor $\psi_2$ is well-correlated with 
$\col_{N, R}$ in the sense of \eqref{whoop}. However, they each place a constant fraction of their $L_1$ mass on $R_{N/2, 2, 1}$,
and they are designed so that their values exactly cancel out on inputs in $R_{N/2, 2, 1}$. 
This allows us to show that $\psi=\psi_1 + \psi_2$ 
satisfies \eqref{whoop}, even though $\psi_1$ and $\psi_2$ individually do not.

The second idea goes into the construction of $\psi_1$ and $\psi_2$ themselves. Specifically, we think of $\psi_1$ and $\psi_2$ as each being
constructed in a two-step process. We focus on $\psi_1$ in this discussion, since the construction of $\psi_2$ is similar. 
Very roughly speaking, in the first step, we consider a ``polynomial'' $\psi'$ of pure high degree $\Omega(N^{1/3})$ that places a large constant fraction of its $L_1$ mass
on $T_1$; the construction of $\psi'$ is closely related to our construction of the simpler dual polynomial $\phi$ from Stage 1.

The reason we place the term ``polynomial'' in quotes above is that there is an important technical caveat to our construction of $\psi'$: we think of $\psi'$ as 
placing weight on sets $R_{N/2, a, 1}$ for many \emph{invalid} triples $(N/2, a, 1)$, in addition to some valid ones. Of course, if $(N/2, a, 1)$ is invalid, then $R_{N/2, a, 1}=\emptyset$,
so $\psi'$ cannot place non-zero weight on the set. 
To address this issue, in Step 2, we add to $\psi'$ a bunch of polynomials $\psi''_{N/2, a, 1}$, each of pure high degree $\Omega(N^{1/3})$. For each invalid triple $(N/2, a, 1)$,
$\psi''_{N/2, a, 1}$ is specifically
constructed to cancel out the weight that $\psi'$ ``places'' on $R_{N/2, a, 1}$. 

Analogously to how our constructions of $\phi$ and $\psi'$ were closely
related to the dual witness for $\OR$ constructed in our earlier work \cite{bt13}, our construction of $\psi''_{N/2, a, 1}$ is closely related to a 
dual witness $\eta$ for the Majority function, $\MAJ$, that we constructed in the same work.
Each $\psi''_{N/2, a, 1}$ places additional non-zero mass on (non-empty) sets of the form $R_{m, a, 1}$ for some $a\neq 1$ and $m \in [N]$, but
we are able to show that the total mass placed on such sets is small, using an analysis closely related to the analysis of $\eta$ from \cite{bt13}.
Hence we are able to show that 
$\psi_1 = \psi' + \sum_{\text{invalid triples } (N/2, a, 1)} \psi''_{N/2, a, 1}$ still places a large constant fraction of its $L_1$ mass on $T_1$. 

\subsection{Discussion}
\label{sec:discussion}
\paragraph{On Kutin's second step.} Our construction of the optimal dual witness $\psi$ for the Collision function mimics the second step of Kutin's symmetrization argument in three important ways described below. We find this mimicry to be somewhat surprising -- in our earlier work \cite{bt13}, we constructed an optimal dual polynomial for symmetric Boolean functions
that bore little relation to Paturi's well-known symmetrization-based proof of the same result \cite{paturi}. 
We believe that this mimicry sheds new light, or at least gives a new perspective, on why Kutin's proof takes the structure that it does.

Recall that the second step of Kutin's proof (cf. Section \ref{sec:kutin}) proceeds via a case analysis. The first ``branch''
in the case analysis depends on whether
the expected value of the assumed $n$-variate approximation $p$ to $\col_{N, R}$ on the set $R_{N/2, 2, 1}$ is large or small. 
This is mimicked in our construction of $\psi$ as a sum of two dual polynomials $\psi_1$ and $\psi_2$, both of which individually 
place a lot of weight on $R_{N/2, 2, 1}$, but whose sum places \emph{zero} weight on $R_{N/2, 2, 1}$. 

The second ``branch'' in Kutin's case analysis depends on whether $|P(N/2, a, 1)|$ or $|P(N/2, 2, b)|$ is small for all $a, b \leq N^{2/3}$. 
He needs to consider this second branch because $P(m, a, b)$ is not guaranteed to be bounded for invalid triples $(m, a, b)$.
 
This branch is mimicked in our construction of $\psi_1$ (respectively, $\psi_2$) as the sum of a single ``polynomial'' $\psi'$ that tries to
place weight on sets $R_{N/2, a, 1}$ for invalid triples $(N/2, a, 1)$ (respectively, $(N/2, 2, b)$), and many other polynomials $\psi''_{N/2, a, 1}$ (respectively, $\psi''_{N/2, 2, b}$), 
one for each invalid triple 
$(N/2, a, 1)$ (respectively, $(N/2, 2, b)$). In our dual setting, the reason we need to incorporate the polynomials $\psi''_{N/2, a, 1}$ is to cancel out the weight that $\psi'$ tries to place
on invalid sets $R_{N/2, a, 1}$. 

Finally, recall that Kutin applied Markov's inequality from approximation theory in two of the four cases considered in his analysis, and Bernstein's inequality in the other two cases.
Markov's inequality underlies Nisan and Szegedy's standard symmetrization-based proof that the approximate degree of $\OR$ is $\Omega(\sqrt{n})$ \cite{nisanszegedy}, while Berstein's
inequality underlies Paturi's proof that the approximate degree of $\MAJ$ is $\Omega(n)$ \cite{paturi}. This is mimicked in our construction of $\psi_1$ and $\psi_2$ as the sum of $\psi'$
and the $\psi''_{N/2, a, 1}$ and $\psi''_{N/2, 2, b}$ polynomials: the construction of $\psi'$ is closely analogous to the dual witness for $\OR$  from \cite{bt13},
while the construction of the $\psi''_{N/2, a, 1}$ and $\psi''_{N/2, 2, b}$ polynomials is based on the dual witness for $\MAJ$ from \cite{bt13}.

\paragraph{On the first step, or why $k$-to-1 inputs matter.} As noted by several authors (e.g., \cite[Slide 36]{aaronsontutorial}), the most miraculous element of the symmetrization-based proof
of the Collision lower bound is the first step (cf. Lemma \ref{lem:informal}). 
The crux of this step is to establish, roughly speaking, that for any $n$-variate polynomial $p$ of total degree $d$, 
the function $P(k) := \E_{x \in T_k}[p(x)]$ is a polynomial in $k$ of degree at most $d$. Why should this hold? More basically, why should inputs
that are $k$-to-$1$ even play a prominent role in the proof? 

We provide some partial intuition for this in Section \ref{sec:complementaryslackness}. Specifically, we explain that there is an (asymptotically) optimal 
approximation $p$ for $\col_{N, R}$ such that $k$-to-1 inputs correspond to constraints that are made tight
by the solution corresponding to $p$ in the primal linear program of Section \ref{sec:prelims}. Hence, complementary slackness suggests that there should be a corresponding dual
witness $\psi$ that places weight only on inputs that are $k$-to-1, or nearly so, justifying the prominent role that $k$-to-1 inputs play in 
both the symmetrization-based proof and our new dual proof.




\subsection{Formal Statement of Lemma \ref{lem:informal}}
Following Kutin \cite{kutin}, we define a special collection of functions which are $a$-to-1 on one part of the domain and $b$-to-1 on the other part. For $N > 0$, recall that a triple of numbers $(m, a, b)$ is \emph{valid} if $a | m$ and $b | (N-m)$. For each valid triple $(m, a, b)$, we define
\[g_{m, a, b}(i) = \begin{cases}
\lceil i / a \rceil \quad \text{if } 1 \le i \le m \\
R - \lfloor (N - i) / b \rfloor \quad \text{if } m < i \le n.
\end{cases}\]

Moreover, for each valid triple $(m, a, b)$, we define a set $R_{m, a, b}$ that is the orbit of $g_{m, a, b}$ under the automorphism group $S_N \times S_R$. Namely,
\[x \in R_{m, a, b} \iff \exists \sigma \in S_N, \tau \in S_R \colon \quad \tau \circ g_x \circ \sigma = g_{m, a, b}.\]
Note that the sets $R_{m, a, b}$ are not uniquely determined; for instance $R_{m, 1, 1} = R_{0, a, 1} = R_{N, 1, b} = T_1$ for every $m, a, b$.

\begin{lemma} \label{lem:symmetrization}
Let $p(x)$ be a real polynomial over $\bits^n$ of degree $d$. There is a trivariate polynomial $P$ of degree at most $d$ with
the property that for all valid triples $(m, a, b)$,
\[P(m, a, b) = \E_{x \in R_{m, a, b}}[p(x)].\]
\end{lemma}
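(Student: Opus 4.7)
By linearity of expectation, it suffices to handle each monomial $\chi_S$ with $|S|\le d$ separately and produce a trivariate polynomial of degree at most $|S|$ that equals $\E_{x \in R_{m,a,b}}[\chi_S(x)]$ on valid triples. The crucial preliminary observation is that $\chi_S$ factors along the block structure of $\bn$: if $I \subseteq [N]$ denotes the set of blocks in which $S$ selects at least one bit, then $\chi_S(x) = \prod_{i \in I} f_i(g_x(i))$ for some $\pm 1$-valued functions $f_i \colon [R] \to \bits$, and $k := |I| \le |S|$. So the goal reduces to showing that quantities of the form $\E_{x \in R_{m,a,b}}\bigl[\prod_{\ell=1}^{k} f_\ell(g_x(i_\ell))\bigr]$ are polynomials of degree at most $k$ in $(m,a,b)$.

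Next, I would exploit the fact that $R_{m,a,b}$ is the orbit of $g_{m,a,b}$ under $S_N \times S_R$: averaging over $R_{m,a,b}$ is the same as averaging $\prod_\ell f_\ell(\tau(g_{m,a,b}(\sigma^{-1}(i_\ell))))$ over uniform $\sigma \in S_N$ and $\tau \in S_R$. Setting $j_\ell := \sigma^{-1}(i_\ell)$ yields a uniformly random ordered $k$-tuple of distinct elements of $[N]$. Now condition on the set partition $\pi$ of $[k]$ recording the equality pattern of the values $T_\ell := g_{m,a,b}(j_\ell)$. Conditioned on $\pi$, the tuple $(\tau(T_\ell))_\ell$ is a uniformly random injection from the blocks of $\pi$ into $[R]$, so the conditional $\tau$-expectation depends only on $\pi$ and the $f_\ell$'s, not on $(m,a,b)$. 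This reduces the problem to showing that $\Pr[\pi]$, as a function of $(m,a,b)$, is a polynomial of degree at most $k$.

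Finally, $\Pr[\pi]$ is a combinatorial count divided by the constant $(N)_k$. A tuple $(j_1,\dots,j_k)$ realizes pattern $\pi$ iff each block $B \in \pi$ is mapped by $g_{m,a,b}$ to a single color, with distinct colors for distinct blocks. Splitting $\pi = L \sqcup R$ according to whether each block's color lies in the ``$a$-to-$1$ range'' $[m/a]$ or the ``$b$-to-$1$ range,'' the count becomes a sum of terms of the form
\[\bigl(\tfrac{m}{a}\bigr)_{|L|}\,\bigl(\tfrac{N-m}{b}\bigr)_{|R|}\prod_{B \in L}(a)_{|B|}\prod_{B \in R}(b)_{|B|},\]
where $(x)_n$ denotes the falling factorial. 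The main calculation is to verify that the $1/a^{|L|}$ and $1/b^{|R|}$ buried inside $(m/a)_{|L|}$ and $((N-m)/b)_{|R|}$ cancel exactly with the $a^{|L|}$ and $b^{|R|}$ pulled out of the $\prod (a)_{|B|}$ and $\prod (b)_{|B|}$ factors, yielding a clean polynomial of total degree $k_L := \sum_{B \in L}|B|$ in $(m,a)$ and $k_R := k - k_L$ in $(m,b)$. Summing over $L \sqcup R$, then over $\pi$, and finally over monomials $S$, produces a trivariate polynomial of total degree at most $d$, as required. The spot where I expect care to be needed is precisely this cancellation and degree count; the rest is routine symmetry bookkeeping.
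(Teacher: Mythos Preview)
Your proof plan is correct, and the falling-factorial cancellation you flag as the delicate spot does work exactly as you say: writing $(m/a)_{|L|}\prod_{B\in L}(a)_{|B|} = m(m-a)\cdots(m-(|L|-1)a)\cdot\prod_{B\in L}(a-1)\cdots(a-|B|+1)$ gives a polynomial of total degree $|L|+(k_L-|L|)=k_L$ in $(m,a)$, and symmetrically for the $b$-side, so each term has total degree $k_L+k_R=k\le|S|$.

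The paper, however, does not argue this way. It derives Lemma~\ref{lem:symmetrization} by composing two cited results: Lemma~\ref{lem:representation} (from \cite{bt14}), which rewrites any polynomial on $\bn$ as a polynomial of no greater degree in the incidence variables $t_{ij}=\mathbbm{1}[g_x(i)=j]$, and Kutin's Lemma~\ref{lem:kutin}, which is the symmetrization statement for polynomials already expressed in the $t_{ij}$'s. Your route bypasses the $t_{ij}$ representation entirely, factoring $\chi_S$ directly along blocks and doing the orbit counting by hand; this is essentially a self-contained reconstruction of Kutin's argument specialized to the parity basis. The advantage of your approach is that it is fully elementary and keeps the degree accounting transparent; the advantage of the paper's approach is modularity---once one has Lemma~\ref{lem:representation}, Kutin's lemma can be invoked as a black box without redoing the combinatorics. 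One small cosmetic point: you use $R$ both for the range size and for the ``right'' part of the partition $\pi$; renaming the latter would avoid confusion.
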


The statement of 
Lemma \ref{lem:symmetrization} differs slightly from the corresponding lemma in Kutin's work \cite{kutin} (Lemma \ref{lem:kutin} below).
Lemma \ref{lem:symmetrization} follows by combining Kutin's formulation with the following
simple lemma from \cite{bt14}.

\begin{lemma}[\cite{bt14}] \label{lem:representation}
Let $p$ be a polynomial over $\{-1, 1\}^n$. Consider the map $T \colon \{-1, 1\}^m \to \{0, 1\}^{N\cdot R}$ defined by $T_{ij}(x) = 1$ if $g_x(i) = j$, and $T_{ij}(x) = 0$ otherwise. Then there is a polynomial $q \colon \{0, 1\}^{N \cdot R} \to \R$ with $\deg q \le \deg p$, such that $q(T(x)) = p(x)$ for all $x \in \{-1, 1\}^n$.
\end{lemma}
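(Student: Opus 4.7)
The plan is to give a direct construction of $q$ by substituting for each Boolean variable $x_k$ an explicit affine combination of the $T_{ij}$'s, and then checking that this substitution does not increase the degree.

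More concretely, I would first set up notation: for each $k \in [n]$, there is a unique block $i \in [N]$ and bit position $\ell \in \{1, \dots, \log_2 R\}$ such that $x_k$ is the $\ell$-th bit of $g_x(i)$ (written in $\{-1,1\}$ encoding). Write $b_\ell(j) \in \{-1, 1\}$ for the $\ell$-th bit of the binary representation of $j \in [R]$. Then the first step is the identity
\[
x_k \;=\; \sum_{j=1}^{R} b_\ell(j)\, T_{ij}(x) \qquad \text{for all } x \in \{-1,1\}^n,
\]
which holds because exactly one of the $T_{ij}(x)$ is $1$ (the one with $j = g_x(i)$), and the corresponding coefficient $b_\ell(g_x(i))$ is by definition the value of the $\ell$-th bit of $g_x(i)$, i.e.\ $x_k$.

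The second step is to define $q$ by substitution. Writing $p$ in its multilinear form as $p(x) = \sum_{S} c_S \prod_{k \in S} x_k$, replace each occurrence of $x_k$ by the linear expression $\sum_{j} b_{\ell_k}(j)\, T_{i_k, j}$ above, where $(i_k, \ell_k)$ is the block-bit pair associated to $k$. This yields a polynomial $q$ in the variables $\{T_{ij}\}_{i \in [N], j \in [R]}$ such that $q(T(x)) = p(x)$ for every $x \in \{-1,1\}^n$, by construction.

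Finally, I would verify the degree bound: each substitution replaces a degree-$1$ variable $x_k$ by a degree-$1$ polynomial in the $T_{ij}$'s, so a monomial $\prod_{k \in S} x_k$ becomes, after expansion, a polynomial of degree exactly $|S|$ in the $T_{ij}$'s. Summing over $S$ gives $\deg q \le \deg p$. There is no genuine obstacle here; the only thing worth flagging is that the polynomial $q$ is \emph{not} required to be multilinear, and indeed it will generally not be (expanding $x_{k_1} x_{k_2}$ for two bits $k_1, k_2$ in the same block $i$ produces products $T_{i,j_1} T_{i,j_2}$). But since the lemma only asserts $q(T(x)) = p(x)$ on the image of $T$, and not multilinearity of $q$, this is harmless. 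Thus the entire argument reduces to the observation that each input bit is an affine function of the one-hot encoding of the corresponding block, which lifts a degree-$d$ polynomial in $x$ to a degree-$d$ polynomial in $T(x)$ without any loss.
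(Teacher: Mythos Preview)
Your proof is correct. The paper does not actually prove this lemma; it merely cites it from \cite{bt14} as a ``simple lemma'' and uses it as a black box to derive Lemma~\ref{lem:symmetrization} from Kutin's Lemma~\ref{lem:kutin}. Your argument---expressing each input bit as a linear combination of the one-hot variables $T_{ij}$ (using that $\sum_j T_{ij}(x)=1$ for each block $i$) and then substituting---is the natural proof and is presumably what the cited reference does as well.
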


%

\begin{lemma}[\cite{kutin}]\label{lem:kutin}
Let $q(t)$ be any degree $d$ polynomial in the variables $t_{ij}$. For a valid triple $(m, a, b)$, define $Q(m, a, b)$ by
\[P(m, a, b) = \E_{x \in R_{m, a, b}}[q(T(x))].\]
Then $P$ is a degree $d$ polynomial in $m, a, b$.
\end{lemma}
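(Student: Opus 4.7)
The plan is to expand $q$ into monomials and compute each expectation separately, making essential use of the symmetry of the orbit $R_{m,a,b}$ under $S_N \times S_R$. By linearity of expectation, it suffices to prove the claim for a single monomial $\prod_{(i,j) \in S} t_{ij}$ of degree $|S| \leq d$. For $t = T(x)$ we have $\sum_{j \in [R]} t_{ij} = 1$ for every $i \in [N]$, so $t_{ij} t_{ij'} = 0$ whenever $j \neq j'$, and consequently the monomial vanishes identically on the image of $T$ unless $S$ contains at most one pair per row. We may therefore assume $S = \{(i, j_i) : i \in I\}$ for some $I \subseteq [N]$ with $|I| \leq d$. On such an $S$, the monomial evaluated at $T(x)$ equals the indicator of the event $\{g_x(i) = j_i \text{ for all } i \in I\}$, so the contribution of this monomial to $P(m, a, b)$ is the probability of this event when $x$ is drawn uniformly from $R_{m, a, b}$.

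To compute this probability, let $v_1, \dots, v_\ell$ be the distinct values appearing in $(j_i)_{i \in I}$ with multiplicities $s_1, \dots, s_\ell$, so $\sum_k s_k = |I|$. A uniform $x \in R_{m,a,b}$ is specified by a partition of $[N]$ into $A$ of size $m$ and $B = [N] \setminus A$, two disjoint image sets $A', B' \subseteq [R]$ of sizes $u := m/a$ and $w := (N-m)/b$, and surjective $a$-to-$1$ and $b$-to-$1$ maps $A \to A'$ and $B \to B'$. Since $g_x(i) \in A'$ if and only if $i \in A$, I would condition on which subset $K \subseteq [\ell]$ of the values $v_k$ lies in $A'$ (the rest lying in $B'$). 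A direct multinomial count of the configurations consistent with both the event and this conditioning, divided by $|R_{m,a,b}|$, yields after cancellation
\[ \frac{N_K}{|R_{m,a,b}|} = \frac{1}{N^{\underline{|I|}}\, R^{\underline{\ell}}} \cdot u^{\underline{|K|}}\, w^{\underline{\ell-|K|}} \prod_{k \in K} a^{\underline{s_k}} \prod_{k \notin K} b^{\underline{s_k}}, \]
where $x^{\underline{r}} := x(x-1)\cdots(x-r+1)$ denotes the falling factorial.

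The crucial check is that this expression is a genuine polynomial in $m, a, b$, despite the apparent denominator from $u = m/a$. Since each $s_k \geq 1$, the falling factorial $a^{\underline{s_k}}$ is divisible by $a$, so $\prod_{k \in K} a^{\underline{s_k}} = a^{|K|}\, \tilde{Q}(a)$ for some polynomial $\tilde{Q}$ of degree $\sum_{k \in K}(s_k - 1)$. Writing $u^{\underline{|K|}} = \prod_{i=0}^{|K|-1}(m-ia)/a^{|K|}$, the $a^{|K|}$ factors cancel to leave $\prod_{i=0}^{|K|-1}(m - ia) \cdot \tilde{Q}(a)$, a genuine polynomial in $m, a$ of total degree $\sum_{k \in K} s_k$; a symmetric calculation handles the $b$-piece. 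Summing over the finitely many $K \subseteq [\ell]$ preserves polynomiality, yielding a polynomial in $m, a, b$ of total degree at most $|I| \leq d$ per monomial, and hence of degree at most $d$ for $q$ by linearity. The main obstacle I expect is the bookkeeping in the multinomial count, particularly tracking that $I_K := \{i \in I : j_i \in \{v_k\}_{k \in K}\}$ is forced into $A$ while fitting $s_k$ positions inside each $a$-element preimage class. A subtle but important point is that for combinatorially ``invalid'' choices of $K$ (where some $s_k > a$ or $s_k > b$), the true count $N_K$ is zero, and the polynomial formula must also vanish there; this is automatic because $a^{\underline{s_k}} = 0$ for any nonnegative integer $a < s_k$, so the polynomial identity agrees with the combinatorial probability at every valid triple.
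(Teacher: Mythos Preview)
The paper does not give its own proof of this lemma; it is quoted directly from Kutin's paper without argument. Your proof is correct and is the standard one: reduce by linearity to a single monomial, interpret it on the image of $T$ as the indicator of $\{g_x(i)=j_i:i\in I\}$, compute its probability over a uniform $x\in R_{m,a,b}$ via the $S_N\times S_R$ symmetry, and verify that the resulting falling-factorial expression is a genuine polynomial in $(m,a,b)$ of total degree at most $|I|$ after the $a^{|K|}$ and $b^{\,\ell-|K|}$ denominators coming from $u=m/a$ and $w=(N-m)/b$ are cancelled by the factors of $a$ and $b$ in $\prod_{k\in K}a^{\underline{s_k}}$ and $\prod_{k\notin K}b^{\underline{s_k}}$. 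The one case you leave implicit is $a=b$, where your parameterization by $(A,A',B',\text{maps})$ overcounts each $x\in R_{m,a,a}=T_a$; but since numerator and denominator are overcounted by the same factor the ratio $N_K/|R_{m,a,b}|$ is unaffected, and summing your formula over $K$ collapses via the Vandermonde identity $\sum_{K\subseteq[\ell]}u^{\underline{|K|}}w^{\underline{\ell-|K|}}=(u+w)^{\underline{\ell}}$ to the correct $m$-independent answer.
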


\section{An $\Omega(\sqrt{\log N / \log \log N})$ Lower Bound for the Collision Function}
\label{sec:firstlb}
The following lemma is a refinement of \cite[Proposition 14]{bt13}, which was used there to construct a dual polynomial for $\OR$.

\begin{lemma} \label{lem:or-dual}
There exists a constant $\zeta > 0$ such that for all $\delta \in (0, 1)$ and $L \ge 1$, there is an explicit $\omega: \{1, \dots, L\} \to \R$ with
\begin{enumerate}
\item $\omega(1) \ge \frac{1-\delta}{2}$
\item $- \omega(2) \ge \frac{1-\delta}{2}$
\item $\sum_{k=1}^L |\omega(k)| = 1$
\item For every polynomial $p: \{1, \dots, L\} \to \R$ of degree $d \le \zeta\sqrt{\delta L}$, we have $\sum_{k = 1}^L  p(k) \omega(k) = 0$.
\end{enumerate}
\end{lemma}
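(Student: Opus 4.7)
The plan is to construct $\omega$ explicitly as rescaled Lagrange interpolation weights on a set $S = \{1, 2, t_3, \ldots, t_{r+2}\}$, where $r = \lfloor \zeta \sqrt{\delta L}\rfloor$ and the remaining support points $t_j = \lceil \alpha (j-2)^2\rceil$ are quadratically spaced with $\alpha = K/\delta$ for a large constant $K$; the choice ensures $S \subseteq \{1, \ldots, L\}$ provided $\zeta^2 K \leq 1$. I would then define
\[\omega(t) = \frac{C}{\prod_{t' \in S \setminus \{t\}}(t - t')}\text{ for } t \in S, \qquad \omega(t) = 0 \text{ for } t \notin S,\]
with $C$ chosen so that $\|\omega\|_1 = 1$. (Edge cases where $\zeta\sqrt{\delta L} < 1$ are handled trivially by $\omega(1) = 1/2$, $\omega(2) = -1/2$.)

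Property (3) holds by construction. Property (4) follows from the classical fact that the $(|S|-1)$-st divided difference of a polynomial of degree less than $|S|-1$ vanishes; since $|S| - 2 = r \geq \zeta\sqrt{\delta L}$, this gives $\sum_t \omega(t) p(t) = 0$ for every $p$ of degree at most $r$. The Lagrange formula also makes the sign of $\omega$ alternate along $S$ (the $j$-th factor $(1-2)$ is the only sign change one picks up in moving from $\omega(1)$ to $\omega(2)$), so fixing $\mathrm{sign}(C)$ yields $\omega(1) > 0$ and $\omega(2) < 0$.

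The substantive step is to establish the magnitude conditions (1) and (2). Using the formula $|\omega(t_j)/\omega(1)| = \prod_{t' \in S \setminus \{1, t_j\}} |1-t'|/|t_j-t'|$, one checks that for $j = 2$ the ratio equals $\prod_{k=3}^{r+2}(t_k-1)/(t_k-2) \leq \exp\bigl(\sum_k 1/(t_k-2)\bigr) = 1 + O(1/\alpha) = 1 + O(\delta/K)$ by the quadratic growth of $t_k$. For $j \geq 3$ with $m := j-2$, the ratio factors into a prefactor $1/(t_j - 2) \asymp \delta/(K m^2)$ times a residual product which (using $t_k \asymp \alpha l^2$) reduces to $\prod_{l \ne m,\, l \le r-1} l^2/|m^2 - l^2|$; a standard factorial identity simplifies the latter to $2((r-1)!)^2/\bigl(m^2 (r-1-m)!(r-1+m)!\bigr)$, which is $O(1)$ uniformly in $m$. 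Hence $|\omega(t_j)/\omega(1)| = O(\delta/(K m^2))$, and summing gives a tail mass $\sum_{j \geq 3}|\omega(t_j)/\omega(1)| = O(\delta/K)$.

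Choosing $K$ large makes both $|\omega(2)/\omega(1)| - 1$ and the tail mass arbitrarily small multiples of $\delta$, so that after $L_1$-normalization both $\omega(1)$ and $-\omega(2)$ exceed $(1-\delta)/2$. The main obstacle is the tail estimate for $j \geq 3$: it hinges on the \emph{quadratic} spacing of the $t_j$'s, specifically engineered so that the prefactor $1/(t_j - 2)$ supplies the crucial factor of $\delta$, while the residual product of $l^2/|m^2-l^2|$ terms stays bounded via the factorial identity above. Any coarser spacing would either push $t_{r+2}$ outside $\{1,\ldots,L\}$ or fail to produce the required $\delta$-shrinkage of the tail.
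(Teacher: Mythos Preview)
Your proposal is correct and follows essentially the same approach as the paper: both constructions take $\omega$ to be (rescaled) Lagrange interpolation weights on a support set consisting of $\{1,2\}$ together with quadratically spaced points of size $\Theta(i^2/\delta)$, and both prove the tail bound via the same factorial identity $\frac{(r!)^2}{(r-m)!(r+m)!}\le 1$. The only cosmetic difference is that the paper derives property~(4) by writing the weights with an explicit binomial factor $\binom{L}{k}$ and invoking the identity $\sum_{k}(-1)^k\binom{L}{k}q(k)=0$ for $\deg q<L$, whereas you invoke the equivalent divided-difference formulation directly.
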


The proof will make use of the following simple combinatorial identity, a simple proof of which can be found in 
\cite[Appendix A]{odonnellservedio}.

\begin{lemma} \label{lem:combinatorial}
For any $L > 0$, let $q \colon \mathbb{R} \rightarrow \mathbb{R}$ be a univariate polynomial of degree strictly less than $L$.
Then $\sum_{k=0}^{L} (-1)^k {L \choose k} q(k) = 0$.
\end{lemma}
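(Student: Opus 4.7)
The plan is to exploit linearity together with a convenient choice of basis for the space of polynomials of degree less than $L$. Since the map $q \mapsto \sum_{k=0}^L (-1)^k \binom{L}{k} q(k)$ is linear in $q$, it suffices to verify the identity on any spanning set of the space of polynomials of degree at most $L-1$. The monomial basis $\{1, k, k^2, \dots, k^{L-1}\}$ works but yields messy algebra; instead I would use the binomial-coefficient basis $\{q_j(k) := \binom{k}{j} : 0 \le j \le L-1\}$, which spans the same space since $\binom{k}{j}$ is a polynomial in $k$ of degree exactly $j$.

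With this basis, the identity reduces to showing $\sum_{k=0}^L (-1)^k \binom{L}{k} \binom{k}{j} = 0$ for each $j$ with $0 \le j < L$. The key manipulation is the standard subset-of-a-subset identity
\[
\binom{L}{k}\binom{k}{j} = \binom{L}{j}\binom{L-j}{k-j},
\]
which one verifies by expanding both sides as factorials. Substituting and reindexing with $m = k - j$ (and noting that $\binom{k}{j} = 0$ for $k < j$), the sum becomes
\[
\binom{L}{j}(-1)^j \sum_{m=0}^{L-j} (-1)^m \binom{L-j}{m} = \binom{L}{j}(-1)^j (1 + (-1))^{L-j}
\]
by the binomial theorem. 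Since $j < L$, the exponent $L - j$ is a positive integer, so $(1-1)^{L-j} = 0$, and the claim follows.

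The proof is essentially routine; the only real choice is which basis to use, and the binomial basis is exactly the one that makes the Vandermonde-type cancellation transparent. Conceptually, the lemma is the statement that the $L$-th forward difference operator $\Delta^L$ annihilates every polynomial of degree less than $L$, since $\Delta^L q(0) = \sum_{k=0}^L (-1)^{L-k}\binom{L}{k}q(k)$ and each application of $\Delta$ strictly lowers the degree of a nonconstant polynomial; I would mention this viewpoint as an alternative proof sketch but carry out the algebraic one above in full, as it is self-contained and needs no auxiliary facts about finite differences. There is no serious obstacle.
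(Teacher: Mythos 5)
Your proof is correct. Note that the paper does not actually prove this lemma itself --- it defers to the cited reference (O'Donnell--Servedio, Appendix A) --- so your argument serves as a self-contained replacement, and it is the standard one: expanding $q$ in the binomial basis $\binom{k}{j}$, applying the identity $\binom{L}{k}\binom{k}{j}=\binom{L}{j}\binom{L-j}{k-j}$, and collapsing the inner sum via $(1-1)^{L-j}=0$, which is precisely the algebraic form of the observation that the $L$-th finite difference annihilates polynomials of degree less than $L$. No gaps; the only minor point worth making explicit is that $\binom{k}{j}=0$ for integer $0\le k<j$, which you already note when reindexing.
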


\begin{proof}[Proof of Lemma \ref{lem:or-dual}]
Let $c = \lceil 16 / \delta \rceil$. Let $m = \lfloor \sqrt{(L-1)/c} \rfloor$ and define the set
\[T = \{1\} \cup \{ci^2 \colon 0 \le i \le m\}.\]
Note that $|T| = \Omega(\sqrt{L/c})$. Define the function $\hat{\omega}: \{0, 1, \dots, L\} \to \R$ by
\[\hat{\omega}(k) = {L \choose k} \frac{c^{m}(m!)^2}{L!} \prod_{j \in [[L]] \setminus T} (k - j) = 
\begin{cases}
\dfrac{c^m(m!)^2}{\prod_{j \in T \setminus \{k\}} (k - j)} \quad \text{ if } k \in T, \\
\\
0 \quad \text{ otherwise.}
\end{cases}\]
It is easy to check that $\hat{\omega}(0) = 1$.

For $k = 1$, we have $|\hat{\omega}(k)|=\frac{c^m(m!)^2}{\prod_{i = 1}^m (ci^2 - 1)}$. Notice that:
\begin{align*}
1 \le \frac{c^m(m!)^2}{\prod_{i = 1}^m (ci^2 - 1)} &= \prod_{i=1}^m \frac{i^2}{i^2 - 1/c} \\
&= \prod_{i=1}^m \left(1 + \frac{1}{ci^2 - 1}\right) \\
&\le \exp \left(\sum_{i=1}^m \frac{2}{ci^2}\right) \\
&\le e^{8/c} \le 1 + \frac{3}{2}\delta.
\end{align*}

On the other hand, for $k = c\ell^2$ with $\ell > 0$, $|\hat{\omega}(k)|$ equals:
\begin{align*}
\frac{c^m(m!)^2}{(c\ell^2 - 1)\prod_{i \in \edit{[[m]]} \setminus \{\ell\}} |c\ell^2 - ci^2|} &= \frac{(m!)^2}{(c\ell^2 - 1)\prod_{i \in \edit{[[m]]} \setminus \{\ell\}} (i + \ell)|i - \ell|} \\
&= \frac{2(m!)^2}{(c\ell^2 - 1)(m+\ell)!(m-\ell)!} \\
&\le \frac{2}{c\ell^2 - 1},
\end{align*}
where the last inequality follows because
\[\frac{(m!)^2}{(m+\ell)!(m-\ell)!} = \frac{m}{m+\ell} \cdot \frac{m-1}{m+\ell-1} \cdot \ldots \cdot \frac{m-\ell+1}{m+1}\]
is a product of factors that are each smaller than 1. Thus, the total contribution of terms excluding $0$ and $1$ to the $\ell_1$ mass of $\hat{\omega}$ is at most
\[\sum_{i=1}^{m} \frac{2}{ci^2 - 1} < \sum_{i=1}^\infty \frac{4}{ci^2} < \frac{8}{c} \le \frac{\delta}{2}.\]

\edit{Now let $b=0$ if $|T|$ is even, and $b=1$ otherwise}, and define $\omega \colon \{1, \dots, L\} \rightarrow \R$ via:
$$\omega(k) = (-1)^{k+b}\hat{\omega}(k-1) / \|\hat{\omega}\|_1.$$ Then
\[-\omega(2) \ge \omega(1) \ge \frac{1}{1 + |\hat{\omega}(1)| + \edit{\delta/2}} \ge \frac{1}{2 + 2\delta} \ge \frac{1-\delta}{2}.\]
This yields the first two claims about $\omega$. The third claim follows immediately from the definition. Finally, let $p$ be a polynomial of degree strictly less than $|T|-1$. Then
\begin{equation} \sum_{k=1}^{L} p(k) \omega(k) = \sum_{k=0}^{L-1} (-1)^k \cdot (-1)^b \cdot {L \choose k} \cdot \frac{c^{m}(m!)^2}{L! \|\hat{\omega}\|_1} \cdot p(k+1)\prod_{j \in [[L]] \setminus T} (k - j) = \sum_{k=0}^{L - 1} (-1)^k {L \choose k} q(k), \label{eq:finale} \end{equation}
where
\[q(k) = \frac{(-1)^b c^{m}(m!)^2}{L! \|\hat{\omega}\|_1} p(k+1)\prod_{j \in \edit{[[L]]} \setminus T} (k - j)\]
is a polynomial of degree less than $L$. Since $q(L) = 0$, the right hand side of \eqref{eq:finale} is zero by Lemma \ref{lem:combinatorial}. This gives the last claim.
\end{proof}

Our prior work \cite{bt13}, building on work of \v{S}palek \cite{spalek}, obtained a dual polynomial $\gamma$ for $\OR_L$ by setting the total weight of $\gamma$ on inputs in $H_k$ (the set of inputs of Hamming weight $k$) to be $\omega(k + 1)$. In that work, the first three properties of $\omega$ ensured that $\gamma$ had high correlation with $\OR$, while the fourth ensured that it had pure high degree $\Omega(\sqrt{L})$.

Analogously, our dual polynomial $\phi$ for $\col_{N, R}$ below sets the total weight of $\phi$ on $T_k$ to be $\omega(k)$. Then again, the first three properties of $\omega$ ensure that $\phi$ is well-correlated with $\col_{N, R}$, and the fourth ensures that it has pure high degree $\Omega(\sqrt{L})$. However, there is the complication that $T_k$ must be non-empty, i.e., $k$ must divide $N$, for every $k$ in the support of $\omega$. To handle this complication, we take $N$ large enough so that all $k = 1, 2, \dots, L$ divide $N$, yielding an $\Omega(\sqrt{\log N/\log\log N})$ lower bound.

\begin{theorem}
Let $N = L!$ for some $L$. For $\delta > 0$, there exists an explicit $(1-2\delta, d)$ dual polynomial $\phi$ for $\col_{N, R}$ with $d = \Omega(\sqrt{\delta L}) = \Omega(\sqrt{\delta \log N/\log\log N})$.
\end{theorem}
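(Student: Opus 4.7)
The plan is to lift the one-dimensional dual witness $\omega \colon \{1, \dots, L\} \to \R$ from \lemref{lem:or-dual} to a dual polynomial on $\{-1,1\}^n$ by spreading each weight $\omega(k)$ uniformly over the orbit $T_k$. Since $N = L!$, every $k \in \{1, \dots, L\}$ divides $N$, so each $T_k$ is nonempty, and I define
\[
\phi(x) = \begin{cases} \omega(k)/|T_k| & \text{if } x \in T_k \text{ for some } k \in \{1, \dots, L\}, \\ 0 & \text{otherwise}. \end{cases}
\]
The sets $T_k$ are pairwise disjoint, so $\phi$ is well-defined, and $\|\phi\|_1 = \sum_{k=1}^L |\omega(k)| = 1$ by Property 3 of $\omega$.

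Checking the correlation condition for $\col_{N,R}$ is then bookkeeping. Properties 1 and 2 of $\omega$ give $\omega(1) > 0$ and $\omega(2) < 0$, so $\phi$ is positive on $T_1$ and negative on $T_2$; hence $\sum_{x \in T_1} \phi(x) = \omega(1) \ge (1-\delta)/2$ and $-\sum_{x \in T_2} \phi(x) = -\omega(2) \ge (1-\delta)/2$. Since $\phi$ vanishes outside $T_1 \cup \cdots \cup T_L$ and $T_k \subseteq B$ for $k \ge 3$, the penalty contribution satisfies $\sum_{x \in B} |\phi(x)| = \sum_{k=3}^L |\omega(k)| \le 1 - (1-\delta) = \delta$. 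Summing the three terms delivers correlation at least $1 - 2\delta$.

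The pure high degree condition is where \lemref{lem:symmetrization} carries the weight of the argument. Fix any $S \subseteq [n]$ with $|S| \le d$ and apply \lemref{lem:symmetrization} to $p(x) = \chi_S(x)$, which has degree $|S|$. This produces a trivariate polynomial $P$ of degree at most $|S|$ such that $P(m, a, b) = \E_{x \in R_{m,a,b}}[\chi_S(x)]$ for every valid triple $(m, a, b)$. Since $T_k = R_{N, k, 1}$ whenever $k \mid N$, the univariate restriction $k \mapsto P(N, k, 1)$ is a polynomial in $k$ of degree at most $|S|$, so
\[
\sum_{x \in \{-1,1\}^n} \phi(x)\, \chi_S(x) = \sum_{k=1}^L \omega(k) \cdot \E_{x \in T_k}[\chi_S(x)] = \sum_{k=1}^L \omega(k) \cdot P(N, k, 1) = 0
\]
by Property 4 of $\omega$ whenever $|S| \le \zeta \sqrt{\delta L}$. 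This yields $d = \Omega(\sqrt{\delta L})$, which becomes $\Omega(\sqrt{\delta \log N / \log \log N})$ since $\log(L!) = \Theta(L \log L)$.

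The conceptual heart of the proof is the observation that once \lemref{lem:symmetrization} reduces the parity $\chi_S$ to a univariate polynomial in $k$, the pure high degree condition collapses to exactly the moment-vanishing condition that Property 4 of $\omega$ was engineered to provide; no new technical obstacle arises. The only routine points to verify are $T_k = R_{N,k,1}$ for $k \mid N$ (immediate from the definition of $g_{N,k,1}$) and that $T_k \subseteq B$ for $k \ge 3$ (immediate from the definitions of $B$ and $T_k$).
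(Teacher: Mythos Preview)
Your proof is correct and follows essentially the same approach as the paper: define $\phi$ by spreading each $\omega(k)$ uniformly over $T_k$, verify the correlation bound from Properties 1--3 of $\omega$, and establish pure high degree by invoking \lemref{lem:symmetrization} to reduce $\E_{x\in T_k}[\chi_S(x)]$ to the univariate polynomial $P(N,k,1)$ so that Property~4 of $\omega$ applies. The paper's argument is identical in structure and detail.
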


\begin{proof}
First, notice that $k | N$ for all $k \in [L]$, so $T_k \neq \emptyset$ for every such $k$.
Define $\phi(x) = \omega(k) / |T_k|$ if $x$ is in $T_k$ for some $k \in [L]$, and $\phi(x) = 0$ otherwise, where $\omega$ is obtained by applying Lemma \ref{lem:or-dual}. 
Note that $\phi(x)$ is well-defined since $|T_k| \neq 0$ for all $k \in [L]$, and each $x \in \{-1, 1\}^n$ is in $T_k$ for at most one value of $k$.

We check:
\[\sum_{x \in T_1} \phi(x) - \sum_{x \in T_2} \phi(x) = \omega(1) -\omega(2) \ge 1 - \delta,\]
where the inequality holds by Parts 1 and 2 of Lemma \ref{lem:or-dual}. Moreover,
\[\sum_{x \in B} |\phi(x)| = \sum_{k = 3}^{\edit{L}} |\omega(k)| \le \delta,\] where the inequality holds by combining Parts 1-3 of Lemma \ref{lem:or-dual}.
Thus,
\[\sum_{x \in T_1} \phi(x)-\sum_{x \in T_2} \phi(x)-\sum_{x \in B} |\phi(x)| \ge 1 - 2\delta.\]
Second,
\[\sum_{x \in T_1\cup T_2 \cup B} |\phi(x)| = \sum_{k=1}^{\edit{L}} |\omega(k)| = 1,\] where the final equality holds by Part 3 of Lemma \ref{lem:or-dual}.

\edit{Finally, let $d = \zeta \sqrt{\delta L}$ where $\zeta$ is as in the statement of Lemma \ref{lem:or-dual}, and let $S \subseteq [n]$ with $|S| \le d$.}
We must show that $\sum_{x \in T_1 \cup T_2 \cup B} \phi(x) \chi_S(x) =0$. Note that:
\[\sum_{x \in T_1 \cup T_2 \cup B} \phi(x) \chi_S(x) = \sum_{k=1}^{\edit{L}} \sum_{x \in T_k} \phi(x) \cdot \chi_S(x)
=  \sum_{k=1}^{\edit{L}} \sum_{x \in T_k} \left(\omega(k)/\edit{|T_k|}\right) \cdot \chi_S(x) = \sum_{k=1}^{\edit{L}}  \omega(k) \cdot \E_{x \in T_k}[\chi_S(x)],\]
where the first equality holds because $\phi(x)=0$ if $x$ is not in $T_k$ for some $k \in [L]$. 

By Lemma \ref{lem:symmetrization}, there is a trivariate polynomial $P$ of total degree at most $d$ such that $P(m, a, b) = \E_{x \in R_{m, a, b}}[\chi_S(x)]$ 
for all valid triples $(m, a, b)$. In particular, since $k|N$ for all $k \in \edit{[L]}$, $q(k) := P(N, k, 1)$ is a \emph{univariate} polynomial in $k$ such that $q(k) = E_{x \in T_k}[\chi_S(x)]$ for all $k \in \edit{[L]}$. 
Hence, Part 4 of Lemma \ref{lem:or-dual} implies that \[\sum_{k=1}^{\edit{L}}  \omega(k) \cdot \E_{x \in T_k}[\chi_S(x)] = 0. \]

\end{proof}

\section{An $\Omega(N^{1/3})$ Lower Bound for the Collision Function}
\label{sec:mainlb}

The following lemma is a refinement of \cite[Proposition 10]{bt14}, which constructed an explicit dual polynomial
for $\MAJ$.

\begin{lemma} \label{lem:maj-dual}
There exists a constant $\rho > 0$ for which the following holds. Let $\delta \in (0, 1)$, $N > 0$ an even integer, and $k \in [N]$. Then there is an explicit $\eta_k: \edit{[[N]]} \to \R$ such that
\begin{enumerate}
\item $\eta_k$ is supported on $\{2k, 4k, \dots, 2\lfloor N/2k \rfloor k\} \cup \{N/2\}$
\item $\eta_k(N/2) > \frac{1 - \delta}{2}$
\item $\sum_{r = 0}^{N} |\eta_k(r)| = 1$
\item For every polynomial $p: \{0, \dots, N\} \to \R$ of degree $d \le \rho \sqrt{\delta} N / k$, we have \edit{$\sum_{r = 0}^N  p(r) \eta_k(r) = 0$}.
\end{enumerate}
\end{lemma}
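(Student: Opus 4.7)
The plan is to prove Lemma \ref{lem:maj-dual} by adapting the proof of Lemma \ref{lem:or-dual} to a $\MAJ$-style template, following the construction of \cite[Proposition 10]{bt14}. Let $M = \lfloor N/(2k) \rfloor$, let $\mu^*$ denote the multiple of $2k$ nearest to $N/2$ (so $|\mu^* - N/2| \le k$), and choose $m = \lfloor \rho \sqrt{\delta}\, N/(2k) \rfloor$ for an absolute constant $\rho > 0$ to be determined. The support will be
\[T = \{N/2\} \cup \{\mu^* + 2k\, i : -m \le i \le m\},\]
which is contained in $\{2k, 4k, \ldots, 2Mk\} \cup \{N/2\}$.

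I then define the unnormalized dual polynomial
\[\hat{\eta}_k(r) = (-1)^r \binom{N}{r} \prod_{j \in [[N]] \setminus T}(r - j), \quad r \in [[N]],\]
which vanishes outside $T$. The pure high degree property follows from the classical identity $\sum_{r=0}^N (-1)^r \binom{N}{r} q(r) = 0$ whenever $\deg q < N$: applying it with $q(r) = p(r) \prod_{j \in [[N]] \setminus T}(r - j)$ yields $\sum_r p(r)\, \hat{\eta}_k(r) = 0$ for every $p$ with $\deg p \le |T| - 2 = 2m$, so after normalization Part 4 is satisfied with $d = 2m \ge \rho \sqrt{\delta}\, N/k$.

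The main calculation is the magnitude estimate needed for Part 2. For $r \in T$, Lagrange gives $|\hat{\eta}_k(r)| = N! / \prod_{j \in T \setminus \{r\}}|r - j|$. For $r = \mu^* + 2k\ell$ with $|\ell| \le m$, the ratio $|\hat{\eta}_k(r)|/|\hat{\eta}_k(N/2)|$ simplifies to a product of the shifted central binomial coefficient $\binom{2m}{m+\ell}/\binom{2m}{m}$ with a ``concentration boost'' factor $|N/2 - \mu^*|/|N/2 - r|$. Summing over $|\ell| \le m$, I want to show that the total is at most $\delta$, so that normalization yields $\eta_k(N/2) \ge |\hat{\eta}_k(N/2)|/\|\hat{\eta}_k\|_1 \ge 1/(1+\delta) \ge (1-\delta)/2$. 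Parts 1 and 3 are then immediate from the normalization step.

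The chief obstacle is the magnitude bound in the preceding paragraph: the ``concentration boost'' must overcome the $1/\sqrt{m}$ spread of the shifted binomial weights. When $N/2$ is not a multiple of $2k$, the factor $|N/2 - \mu^*| \in (0, k]$ appears in the denominator of $|\hat{\eta}_k(r)|$ for $r \neq N/2$ but is absent from $|\hat{\eta}_k(N/2)|$, producing the needed amplification. In the exceptional case $N/2 = \mu^*$, where this factor vanishes, a modification is required: one can shift the non-center points of $T$ by a half-step, taking $T = \{N/2\} \cup \{\mu^* \pm (2i-1)k : 1 \le i \le m\}$ so that $|N/2 - r| \ge k$ for all $r \in T \setminus \{N/2\}$; alternatively, one can write $\hat{\eta}_k$ as a sum of two signed measures (in the spirit of the $\psi = \psi_1 + \psi_2$ decomposition of Section \ref{sec:mainlb}) whose contributions near $N/2$ cancel except for a residual point mass, while each summand individually has pure high degree $\Omega(\sqrt{\delta} N/k)$.
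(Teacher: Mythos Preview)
There is a genuine gap in the magnitude estimate. With nodes at spacing $2k$ around $\mu^*$ and a single extra point $N/2$, the Lagrange formula gives, for $r=\mu^*+2k\ell$ and $\epsilon:=N/2-\mu^*\in(0,k]$,
\[
\frac{|\hat\eta_k(r)|}{|\hat\eta_k(N/2)|}\;=\;\frac{\epsilon}{|2k\ell-\epsilon|}\cdot\frac{(m!)^2}{(m+\ell)!(m-\ell)!}\cdot\prod_{j=1}^m\Bigl(1-\tfrac{\epsilon^2}{(2kj)^2}\Bigr).
\]
At $\ell=0$ the first two factors are $1$ and the product lies in $[2/\pi,1]$, so this single term already violates the bound $\sum_{\ell}(\cdots)\le\delta$ you state for any $\delta<2/\pi$. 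Even if one relaxes to the weaker (and still sufficient) target $\sum_{r\ne N/2}(\cdots)\le 1+O(\delta)$, the trouble persists: for $\ell\ne 0$ the ``concentration boost'' $\epsilon/|2k\ell-\epsilon|$ is only $\Theta(1/|\ell|)$ in the worst case $\epsilon=\Theta(k)$, while the binomial ratio is $\Theta(1)$ throughout $|\ell|=O(\sqrt m)$. Hence $\sum_{\ell\ne 0}(\cdots)=\Theta(\log m)$, so $\eta_k(N/2)=O(1/\log m)$ rather than $(1-\delta)/2$, and Part~2 fails whenever $m$ is large.

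The paper's construction avoids this by two changes that work in tandem. First, the evenly spaced nodes are placed at spacing $2ck$ with $c=\Theta(1/\sqrt{\delta})$ rather than $2k$; this costs only a constant factor in $|T|$ (hence in the pure high degree). Second, \emph{two} near-center points, $N/2$ and $t-2k$, are adjoined to $T$, so that both $(r-N/2)$ and $(r-(t-2k))$ appear in the Lagrange denominator. The effect is \emph{quadratic} decay $|\hat\eta(t\pm 2ck\ell)|\le 1/(c^2\ell^2-1)$, giving a convergent series with sum $O(1/c^2)=O(\delta)$. Neither ingredient alone suffices: quadratic decay with spacing $2k$ still sums to $\Theta(1)$, and spacing $2ck$ with only linear decay still diverges logarithmically. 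As a separate point, your fix for the exceptional case $N/2\in 2k\Z$ by shifting to nodes $\mu^*\pm(2i-1)k$ produces \emph{odd} multiples of $k$, which lie outside $\{2k,4k,\dots\}\cup\{N/2\}$ and hence violates Property~1.
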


\begin{proof}
Throughout the proof, we assume for simplicity that $N/2$ is not a multiple of $2k$. The analysis when $N/2$ \edit{is a} multiple of $2k$ is similar.

Let $c = \lceil \frac{10}{\sqrt{\delta}} \rceil$ and $t = 2\lfloor N/(4k) \rfloor k$ and define the set
\[S = \{t \pm 2c \ell k \colon 0 \le \ell \le \lfloor t/(2ck) \rfloor\}.\]
Note that $|S| = \Omega(N/ck)$. We claim that $\pi_S(i) := \prod_{j \in S, j \ne i} |j - i|$ is minimized at $i = t$. Notice that translating all points in $S$ by a constant does not affect $\pi_S(i)$, and scaling all points in $S$ 
by a constant does not affect $\text{argmin}_i \pi_S(i)$. Thus, it is enough to show that $\pi_{S^*}(i)$ is minimized at $i=0$ for the set $S^* = \{\pm \ell \colon \ell \le t\}$. In this case, $\pi_{S^*}(i)$ takes the simple form $(t - i)!(t + i)!$, 
and we see that for all $i \in S^*$,
\[\frac{\pi_{S^*}(0)}{\pi_{S^*}(i)} = \frac{(t!)^2}{(t-i)!(t+i)!} = \frac{t}{t+|i|} \cdot \frac{t-1}{t+|i|-1} \cdot \dots \cdot \frac{t - |i| + 1}{t + 1}\]
is a product of terms smaller than $1$, so $\pi_{S^*}(i)$ is indeed minimized at $i = 0$.

Now let $T = S \cup \{t - 2k, N/2\}$ and define the function
\[\hat{\eta}(r) = {N \choose r} \frac{(2ck)^{2h}(h!)^2 (2k) (N/2 - t)}{N!} \prod _{j \in [[N]] \setminus T} (r - j) = \frac{(2ck)^{2h}(h!)^2 (2k) (N/2 - t)}{\prod_{j \in T \setminus \{r\}} (r - j)}\]
where $h = \lfloor t/2ck \rfloor$. The normalization is chosen so that $|\hat{\eta}(t)| = 1$. 

The reason that we include \emph{both} $(r - (t-2k))$ and $(r - (N/2))$ in \edit{the denominator of $\hat{\eta}$} is to ensure that the rate of 
decay of $\hat{\eta}(r)$ is at least quadratic as $r$ moves away from $t$. This will ultimately allow us to show that a large fraction of the $\ell_1$ mass of $\hat{\eta}$ comes from the point $r = N/2$.

For $r = t - 2k$, the mass $|\hat{\eta}(r)|$ is
\begin{align*}
\frac{(2ck)^{2h}(h!)^2 (2k) (N/2 - t)}{2k(N/2 - t + 2k) \prod_{\ell = 1}^h (2ck\ell - 2k)(2ck\ell + 2k)} &= \frac{(N/2-t)}{N/2-t+2k} \prod_{\ell = 1}^h \left(1 + \frac{1}{(c\ell)^2 - 1}\right) \\
&\edit{\le \frac{1}{2}\exp\left(\sum_{\ell = 1}^h \frac{2}{c^2\ell^2}\right)} \\
&\edit{\le \frac{1}{2}\exp\left(\frac{\pi^2}{3c^2}\right)} \\
&<\frac{1+\delta}{2},
\end{align*}
\edit{where the first inequality holds because $N/2 - t \le 2k$, combined with the fact that $\prod_{\ell = 1}^h (1 + a_\ell) \le \exp(\sum_{\ell = 1}^h a_\ell)$ for nonnegative $a_\ell$.}

For $r = N/2$, we get
\begin{align*}
\edit{|\hat{\eta}(r)|} = \frac{(2ck)^{2h}(h!)^2 (2k) (N/2 - t)}{(N/2 - t)(N/2 - t + 2k) \prod_{\ell = 1}^h (2ck\ell + (N/2-t))(2ck\ell - (N/2-t))} \\
= \frac{2k}{N/2-t+2k} \prod_{\ell = 1}^h \left(\frac{(2ck\ell)^2}{(2ck\ell)^2 - (N/2 -t)^2}\right) \ge \frac{1}{2}.
\end{align*}

Now we analyze the remaining summands, and show that their total contribution is much smaller than $1$. Recall that the choice $i = t$ minimizes $\pi_S(i)$, and that $\pi_S(t)=(2ck)^{2h}(h!)^2$. Therefore,
\[|\hat{\eta}(t + 2ck\ell)| = \frac{(2ck)^{2h}(h!)^2 (2k)(N/2-t)}{\prod_{j \in T \setminus \{t+2ck\ell\}} |t + 2ck\ell - j|} \le \frac{2k(N/2 - t)}{|2ck\ell + 2k||2ck\ell-(N/2 - t)|} \le \frac{1}{c^2\ell^2 - 1},\]
where the final inequality exploits the fact that $N/2 -t < 2k$. Similarly,

\[|\hat{\eta}(t - 2ck\ell)| = \frac{(2ck)^{2h}(h!)^2 (2k)(N/2-t)}{\prod_{j \in T \setminus \{t+2ck\ell\}} |t - 2ck\ell - j|} \le \frac{2k(N/2 - t)}{|2ck\ell - 2k||2ck\ell+(N/2 - t)|} \le \frac{1}{c^2\ell^2 - 1}.\]

We can use this quadratic decay to bound the total $\ell_1$ mass of the points outside of $\{t-k, t, N/2\}$:
\[\sum_{j \in S \setminus \{t\}} |\hat{\eta}(j)| \le \sum_{\ell\ne 0} \frac{1}{(c^2 \ell^2 -1)}\le \frac{2}{c^2-1} \cdot \frac{\pi^2}{6} < \frac{\delta}{2}.\]

Now let $\eta_k(r) = (-1)^{\edit{r + h + N/2}}\hat{\eta}(r) / \|\hat{\eta}\|_1$. Since $\hat{\eta}$ is supported on $T \subseteq \{2k, 4k, \dots, 2\lfloor N/2k \rfloor k\} \cup \{N/2\}$, the function $\eta_k$ is as well, giving the first claim. Moreover,
\[\eta_k(N/2) \ge \frac{1/2}{(1/2 + \delta/2) + 1/2 + \delta/2} \ge \frac{1 - \delta}{2}.\]
This yields the second claim about $\eta_k$. The third claim follows immediately from the definition. Finally, let $p$ be a polynomial of degree strictly less than $|T|$ (where $|T| \ge \rho N/k$ for a constant $\rho$). Then

\begin{align*}
\sum_{r=0}^{N} p(r) \eta_k(r) &= \sum_{r = 0}^N p(r) \frac{(-1)^\edit{r + h + N/2}}{\|\hat{\eta}\|_1} {N \choose r} \frac{(2ck)^{2h}(h!)^2 (2k) (N/2 - t)}{N!} \prod _{j \in \edit{[[N]]} \setminus T} (r - j) \\
&=\sum_{r=0}^{N} (-1)^r {N \choose r} q(r)
\end{align*}
for a polynomial $q$ of degree strictly less than $N$. This is equal to zero by Lemma \ref{lem:combinatorial}, giving the final claim.
\end{proof}

We obtain our dual polynomial $\psi$ for the $\col_{N, R}$ as a linear combination of two simpler functions $\psi_1$ and $\psi_2$. These functions have the following properties.

\begin{lemma} \label{lem:intermediate}
Let $N > 0$ be an integer multiple of 4. For $R \ge N$, there exist explicit $\psi_1, \psi_2 \colon \{-1, 1\}^n \to \R$ and $d = \Omega(\delta^{1/3}N^{1/3})$ such that
\begin{enumerate}
\item $\sum_{x \in T_1}\psi_1(x) > \frac{1-\delta}{2}$.
\item $-\sum_{x \in T_2} \psi_2(x) > \frac{1-\delta}{2}$.
\item $\edit{\|\psi_1\|_1 = \|\psi_2\|_1 = 1}$.
\item $\sum_{x\in T_2} |\psi_1(x)| = \sum_{x \in T_1} |\psi_2(x)| = 0$.
\item $\psi_1, \psi_2$ have pure high degree at least $d$.
\item $\sum_{x \in T_1}\psi_1(x) = \sum_{x \in R_{N/2, 2, 1}} \psi_2(x)$. 
\item $\sum_{x \in R_{N/2, 2, 1}} \psi_1(x) = \sum_{x \in T_2} \psi_2(x)$.
\item $\psi_1$ and $\psi_2$ are each constant on each set $R_{m, a, b}$ when $(m, a, b)$ is valid.
\end{enumerate}
\end{lemma}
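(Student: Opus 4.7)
The plan is to build $\psi_1$ and $\psi_2$ as sums of dual-polynomial-like objects, each constant on every valid orbit $R_{m,a,b}$. By Lemma \ref{lem:symmetrization}, this constancy reduces the pure-high-degree condition to a trivariate orthogonality requirement: if $\psi_i$ takes the constant value $c_{m,a,b}$ on $R_{m,a,b}$, then for every polynomial $P$ of degree $\leq d$,
\[ \sum_{(m,a,b) \text{ valid}} c_{m,a,b} \cdot |R_{m,a,b}| \cdot P(m,a,b) = 0. \]
Thus both $\psi_1$ and $\psi_2$ will be specified by their constant values on a small collection of orbits, and all the analysis reduces to verifying identities on these coefficients.

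For $\psi_1$, I would first define a formal ``seed'' $\psi'_1$ assigning weight $\omega(a)/|R_{N/2,a,1}|$ to each set $R_{N/2,a,1}$, where $\omega$ is the OR-dual from Lemma \ref{lem:or-dual} applied with parameter $L = \Theta(N^{2/3})$ and the same $\delta$. Along the slice $m = N/2$, $b = 1$, the trivariate orthogonality constraint collapses to a univariate one in $a$, and Part~4 of Lemma \ref{lem:or-dual} delivers pure high degree $\Omega(\sqrt{\delta}\,N^{1/3})$. The catch is that for those $a$ in the support of $\omega$ with $a \nmid N/2$, the orbit $R_{N/2,a,1}$ is empty and the formal weight $\omega(a)$ at that point must be cancelled. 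I would do so by adding, for each such invalid $a$, a correction polynomial $\psi''_{N/2,a,1}$ whose common value on $R_{m,a,1}$ is a rescaling of $\eta_a(m)/|R_{m,a,1}|$, where $\eta_a$ is the MAJ-dual of Lemma \ref{lem:maj-dual} with $k = a$. The key features of $\eta_a$ are (i) it is concentrated at $m = N/2$, so it cancels the phantom seed weight, (ii) it is orthogonal along the $m$-axis to univariate polynomials up to degree $\Omega(\sqrt{\delta}\,N/a) = \Omega(\sqrt{\delta}\,N^{1/3})$, and (iii) away from $m = N/2$ its mass decays quadratically, so the total spillover onto valid orbits $R_{m,a,1}$ with $m \neq N/2$ is $O(\delta)$ summed across all corrections.

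The construction of $\psi_2$ is mirror-symmetric: a seed $\psi'_2$ assigning weight $\omega(b)/|R_{N/2,2,b}|$ to each $R_{N/2,2,b}$ for $b$ in the support of $\omega$, together with MAJ-type corrections $\psi''_{N/2,2,b}$ for each invalid $b$. With this symmetric setup, Properties 6 and 7 follow from the matching of the $a=1, 2$ entries of $\omega$ between the two constructions: the $a=1$ seed term of $\psi_1$ places mass $\omega(1)$ on $T_1 = R_{N/2,1,1}$ while the $b=1$ seed term of $\psi_2$ places mass $\omega(1)$ on $R_{N/2,2,1}$, and symmetrically for $a=2$ versus $b=2$.

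Properties 1--4 and 8 are then routine: Properties 1--2 follow from $\omega(1), -\omega(2) \geq (1-\delta)/2$ together with an $O(\delta)$ bound on spillover from the correction terms; Property 3 is normalization; Property 4 holds because $\psi_1$ only weights orbits with third parameter $1$ (disjoint from $T_2$), while $\psi_2$ only weights orbits with second parameter $2$ (disjoint from $T_1$); Property 8 is built in by averaging over $S_N \times S_R$. The main obstacle is Property 5 combined with the exact equalities in Properties 6 and 7. For Property 5 I would apply Lemma \ref{lem:symmetrization} and linearity to decompose the trivariate orthogonality condition into a seed contribution (handled by the $a$-slice OR-dual) and correction contributions (handled by the $m$-slice MAJ-dual for each fixed invalid $a$), combining the pieces via Lemma \ref{lem:phd-sum}. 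The most delicate bookkeeping is ensuring that the correction-term spillovers onto $T_1$, $T_2$, and $R_{N/2,2,1}$ preserve the \emph{exact} equalities in Properties 6--7; this is where mirror symmetry between the two constructions is essential, as it forces the spillover contributions on corresponding orbits to match term-by-term.
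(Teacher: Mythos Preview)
Your overall plan matches the paper's: an OR-type dual $\omega$ along the $a$-axis at $m=N/2$, patched by MAJ-type duals $\eta_a$ along the $m$-axis so that all mass lands on valid orbits. There is one genuine gap and one structural difference worth flagging.

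\textbf{The gap.} Your parameter choices yield $d=\Omega(\sqrt{\delta}\,N^{1/3})$, not the stated $\Omega(\delta^{1/3} N^{1/3})$. The paper takes the OR-dual range to be $K=\Theta(N^{2/3}\delta^{-1/3})$ (not $\Theta(N^{2/3})$) with correlation parameter $\Theta(\delta)$, but uses a \emph{constant} correlation parameter $\delta'=1/2$ for the MAJ-duals. Then the OR side contributes degree $\Omega(\sqrt{\delta K})=\Omega(\delta^{1/3}N^{1/3})$ while the MAJ side contributes $\Omega(N/K)=\Omega(\delta^{1/3}N^{1/3})$; it is this balancing that produces the cube-root dependence on $\delta$. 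With your choice of the same $\delta$ in both lemmas and $L=\Theta(N^{2/3})$, the two sides are unbalanced and you lose a factor.

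\textbf{The structural difference.} The paper does not build $\psi_2$ by mirroring along the $b$-axis. It defines a single coefficient function
\[
\Psi(m,k)=\omega(k)\,\mathbbm{1}_{m=N/2}-\frac{\omega(k)}{\eta_k(N/2)}\,\mathbbm{1}_{k\ge 3}\,\eta_k(m),
\]
applying the MAJ-correction for \emph{every} $k\ge 3$ (not only invalid $k$; this costs nothing and removes the case split). It then sets $\hat\psi_1$ on orbits $R_{m,k,1}$ and $\hat\psi_2$ on orbits $R_{m,k,2}$, both varying the second coordinate. Because $\hat\psi_1$ and $\hat\psi_2$ share the same $\Psi$, one gets $\|\hat\psi_1\|_1=\|\hat\psi_2\|_1$ for free, and Properties 6--7 reduce to the single identity $R_{N/2,1,2}=R_{N/2,2,1}$. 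Your mirror construction along the $b$-axis also works, but the exact equalities in Properties 6--7 survive normalization only because the two $L_1$ norms match, which you must argue separately via symmetry. Incidentally, your concern about correction spillover onto $T_1$, $T_2$, and $R_{N/2,2,1}$ is moot: every correction lives at index $\ge 3$ in the varying coordinate, so none of those orbits is touched.
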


Together, they yield the desired dual polynomial for $\col_{N, R}$.

\begin{theorem} \label{thm:collision_main}
Let $N > 0$ be an integer multiple of 4. For $R \ge N$, there exists an explicit $(1-6\delta, d)$-dual polynomial $\psi$ for $\col_{N, R}$ for $d = \Omega(\delta^{1/3}N^{1/3})$.
\end{theorem}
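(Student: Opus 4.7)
The plan is to set $\psi := \psi_1 + \psi_2$, where $\psi_1, \psi_2$ are the functions produced by Lemma \ref{lem:intermediate} (invoked with parameter $\delta$, yielding $d = \Omega(\delta^{1/3} N^{1/3})$), and to verify directly that $\psi$ is a $(1-6\delta, d)$-dual polynomial for $\col_{N,R}$. The pure-high-degree condition passes through immediately by linearity: for every $|S| \le d$, $\sum_x \psi(x) \chi_S(x) = \sum_x \psi_1(x)\chi_S(x) + \sum_x \psi_2(x)\chi_S(x) = 0$, using Property 5 of the two functions. The substantive task is the correlation bound, and the key point is that $\psi_1$ and $\psi_2$ have been engineered to (nearly) cancel each other on the invalid set $R_{N/2,2,1}$---this is the content of Properties 6--8.

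Define $\alpha := \sum_{x \in T_1}\psi_1(x)$ and $\beta := -\sum_{x \in T_2}\psi_2(x)$. Properties 1 and 2 give $\alpha, \beta \ge (1-\delta)/2$. Property 8 says $\psi_1$ is constant on $R_{N/2, 2, 1}$, and Property 7 identifies this constant as $-\beta/|R_{N/2, 2, 1}|$, so this set contributes total absolute mass $\beta$ to $\|\psi_1\|_1$. Combining with Property 4 ($\sum_{T_2}|\psi_1|=0$) and Property 3 ($\|\psi_1\|_1 = 1$) yields $\alpha + \beta \le 1$, and hence both $|\alpha - \beta| \le \delta$ and $1 - \alpha - \beta \le \delta$. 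Symmetrically, Properties 6 and 8 identify $\psi_2 \equiv \alpha/|R_{N/2,2,1}|$ on $R_{N/2,2,1}$.

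Next I compute the correlation of $\psi$ with $\col_{N,R}$ piece by piece. By Property 4, $\sum_{T_1}\psi = \alpha$ and $\sum_{T_2}\psi = -\beta$. To bound $\sum_{x \in B}|\psi(x)|$, I will split $B = R_{N/2,2,1}\cup B'$, where $B' := B \setminus R_{N/2,2,1}$. On $R_{N/2,2,1}$, the two constants above combine to give $\psi \equiv (\alpha-\beta)/|R_{N/2,2,1}|$, so $\sum_{R_{N/2,2,1}}|\psi| = |\alpha - \beta| \le \delta$---this is the promised cancellation. On $B'$, the triangle inequality together with the identity $\sum_{B'}|\psi_i| = 1 - \alpha - \beta$ (obtained by subtracting the $T_1, T_2, R_{N/2,2,1}$ contributions from $\|\psi_i\|_1 = 1$) yields $\sum_{B'}|\psi| \le 2(1-\alpha-\beta) \le 2\delta$, so $\sum_B |\psi| \le 3\delta$.

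The proof then concludes by observing that the correlation is $\alpha+\beta - \sum_B |\psi| \ge (1-\delta)-3\delta = 1 - 4\delta$, while $\|\psi\|_1 = \alpha+\beta+\sum_B|\psi| \le 1 + 3\delta$, and an elementary manipulation verifies $(1-4\delta)/(1+3\delta) > 1-6\delta$. The real technical weight of the overall lower bound lies in Lemma \ref{lem:intermediate}: the sum $\psi_1+\psi_2$ is a trivial combination, but securing the precise joint structure (especially Properties 6--8) that makes the cancellation on $R_{N/2,2,1}$ work is the delicate step.
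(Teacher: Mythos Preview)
Your overall strategy is sound and close to the paper's, but there is one genuine (if easily reparable) slip and one methodological difference worth noting.

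\textbf{The arithmetic slip.} Your final claimed inequality $(1-4\delta)/(1+3\delta) > 1-6\delta$ is false for small $\delta$ (try $\delta = 0.01$). The problem is that you bound the correlation below using $\alpha+\beta \ge 1-\delta$ but bound $\|\psi\|_1$ above using $\alpha+\beta \le 1$; mixing the two endpoints loses exactly the margin you need. The fix is to keep $\alpha+\beta$ as a common term: writing $s := \sum_{x\in B}|\psi(x)| \le 3\delta$, you have $\text{correlation} = (\alpha+\beta) - s$ and $\|\psi\|_1 = (\alpha+\beta)+s$, so the required inequality becomes $s(1-3\delta) < 3\delta(\alpha+\beta)$, which follows from $s \le 3\delta$ and $\alpha+\beta \ge 1-\delta$ (for $0 < \delta < 1/3$). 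With this correction your argument does yield the stated $(1-6\delta)$ bound.

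\textbf{Comparison with the paper.} The paper does not take $\psi = \psi_1 + \psi_2$; it sets $\psi = a\psi_1 + b\psi_2$ with $a = \alpha$ and $b = \beta$. The payoff is that the cancellation on $R_{N/2,2,1}$ becomes \emph{exact}: for $x \in R_{N/2,2,1}$ one gets $a\psi_1(x) + b\psi_2(x) = \alpha\cdot(-\beta/|R_{N/2,2,1}|) + \beta\cdot(\alpha/|R_{N/2,2,1}|) = 0$, so this set contributes nothing to $\sum_B|\psi|$. Your unweighted sum leaves a residual $|\alpha-\beta| \le \delta$ on $R_{N/2,2,1}$, which you then have to absorb into the error budget. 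Both routes work; the paper's weighting is a bit cleaner and avoids the delicate bookkeeping in the last step, while yours is more direct and shows that the precise coefficients are not essential. (The informal overview in Section~\ref{sec:overview} does write ``$\psi = \psi_1+\psi_2$,'' which may have suggested your version, but the actual proof uses the weighted combination.)
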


\begin{remark}
The dependence of the lower bound Theorem \ref{thm:collision_main} on both parameters $\delta$ and $N$ for $\frac{1}{N} \le \delta \le \frac{1}{10}$, is tight up to a logarithmic factor in the size of the range. We show this in Appendix \ref{app:tightness} by constructing an explicit approximating polynomial for $\col_{N, R}$ of the appropriate degree, by building on the ideas underlying the quantum query algorithm of Brassard et al. \cite{brassard}.
\end{remark}

\begin{proof}[Proof of Theorem \ref{thm:collision_main}, assuming Lemma \ref{lem:intermediate}]
Let $a = \sum_{x \in T_1}\psi_1(x)$ and let $b = \sum_{x \in T_2} |\psi_2(x)| = -  \sum_{x \in T_2} \psi_2(x)$, where $\psi_1$ and $\psi_2$ are as in Lemma \ref{lem:intermediate}. Let $\psi(x) = a\psi_1(x) + b\psi_2(x)$. By Property 5 of Lemma \ref{lem:intermediate} and Lemma \ref{lem:phd-sum} below, $\psi$ also has pure high degree at least $d$. So we need
only show that $\psi$ has correlation at least $1-6\delta$ with $\col_{N, R}$. 
To this end, note that
\begin{enumerate}
\item $\sum_{x \in T_1} \psi(x) = a^2  > \frac{(1-\delta)^2}{4}$. This inequality uses Properties 1 and 3 of Lemma \ref{lem:intermediate}.
\item $-\sum_{x \in T_2} \psi(x)  = b^2 > \frac{(1-\delta)^2}{4}$. This inequality uses Properties 2 and 3 of Lemma \ref{lem:intermediate}.
\item $\sum_{x \in B} |\psi(x)| \le a \sum_{x \in B \setminus R_{N/2, 2, 1}} |\psi_1(x)| + b \sum_{x \in B \setminus R_{N/2, 2, 1}} |\psi_2(x)| \le (a+b)\delta$.
Here, the first inequality exploits the fact that 
\begin{align} \label{eq:pleasethishastoenddiediedie} \sum_{x \in R_{N/2, 2, 1}} |\psi(x)| = \sum_{x \in R_{N/2, 2, 1}} |a \cdot \psi_1(x) + b \cdot \psi_2(x)| = 0.\end{align}
The last equality in \eqref{eq:pleasethishastoenddiediedie} holds because, for all $x \in R_{N/2, 2, 1}$, 
{\small
\begin{flalign*} & a\cdot \psi_1(x) + b \cdot \psi_2(x)  \\
&= \left(\sum_{x' \in T_1} \psi_1(x') \right) \cdot \psi_1(x)  +  \left(-\sum_{x' \in T_2} \psi_2(x') \right) \psi_2(x)\\
& =  \left(\sum_{x' \in R_{N/2, 2, 1}} \psi_2(x') \right) \cdot \psi_1(x)  +  \left(-\sum_{x' \in R_{N/2, 2, 1}} \psi_1(x') \right) \psi_2(x) \\
&  =  \left(\sum_{x' \in R_{N/2, 2, 1}} \psi_2(x') \right) \left(\frac{1}{|R_{N/2, 2, 1}|} \sum_{x' \in R_{N/2, 2, 1}} \psi_1(x')\right)\!+\!  \left(-\sum_{x' \in R_{N/2, 2, 1}} 
\psi_1(x') \right)\left(\frac{1}{|R_{N/2, 2, 1}|} \sum_{x' \in R_{N/2, 2, 1}} \psi_2(x')\right)\\
& =0,\end{flalign*}}

where the second equality exploited Properties 6 and 7 of Lemma \ref{lem:intermediate}, \edit{and the last equality exploited Property 8}.
\end{enumerate}
Thus, the correlation of $\psi$ with $\col_{N, R}$ is
\begin{align*}
\sum_{x \in T_1} \psi(x) - \sum_{x \in T_2} \psi(x) - \sum_{x \in B} |\psi(x)| &\ge a^2 + b^2 - (a+b)\delta \\
&\ge \frac{1}{2}-2\delta \geq (1-6\delta)\cdot \|\psi\|_1,
\end{align*}
where the final inequality holds because $\|\psi\|_1 \le a^2 + b^2 + (a+b)\delta \le \frac{1}{2} + \delta$. 
\end{proof}

\begin{lemma} \label{lem:phd-sum}
Let $\psi_1, \psi_2 : \bits^n \to \bits$ each have pure high degree at least $d$. Then $\psi = \psi_1 + \psi_2$ also has pure high degree at least $d$.
\end{lemma}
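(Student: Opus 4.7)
The plan is to observe that pure high degree is a linear property, so the conclusion follows immediately from linearity of summation over $\{-1,1\}^n$. Concretely, recall from Section \ref{sec:dualdefs} that a function $\phi \colon \{-1,1\}^n \to \R$ has pure high degree at least $d$ precisely when $\sum_{x \in \{-1,1\}^n} \phi(x) \chi_S(x) = 0$ for every $S \subseteq [n]$ with $|S| \leq d$.

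Fix an arbitrary $S \subseteq [n]$ with $|S| \leq d$. I would simply expand
\[
\sum_{x \in \{-1,1\}^n} \psi(x) \chi_S(x) \;=\; \sum_{x \in \{-1,1\}^n} \bigl(\psi_1(x) + \psi_2(x)\bigr)\chi_S(x) \;=\; \sum_{x} \psi_1(x)\chi_S(x) \;+\; \sum_{x} \psi_2(x)\chi_S(x),
\]
and invoke the hypothesis that each of $\psi_1$ and $\psi_2$ individually has pure high degree at least $d$ to conclude that both summands vanish. Since $S$ was arbitrary with $|S| \leq d$, this shows $\psi$ has pure high degree at least $d$.

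There is no real obstacle here; the lemma is essentially a restatement of the fact that the space of functions with pure high degree at least $d$ is a linear subspace of $\R^{\{-1,1\}^n}$, cut out by the linear constraints $\langle \phi, \chi_S\rangle = 0$ for $|S| \leq d$. The lemma is stated separately only because it is invoked as a black box in the proof of Theorem \ref{thm:collision_main} to combine the pure-high-degree guarantees of $\psi_1$ and $\psi_2$ (weighted by the scalars $a$ and $b$; note that scaling clearly preserves pure high degree as well, so the same one-line argument handles $a\psi_1 + b\psi_2$).
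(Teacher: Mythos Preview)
Your proof is correct and essentially identical to the paper's: fix $S$ with $|S|\le d$, split the sum by linearity, and apply the hypothesis to each summand. Your additional remark about scalar multiples (needed for $a\psi_1 + b\psi_2$ in Theorem~\ref{thm:collision_main}) is a nice clarification that the paper leaves implicit.
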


\begin{proof}
Let $S \subseteq [n]$ with $|S| \le d$. Then
\[\sum_{x \in \bits^n} \psi(x) \chi_S(x) = \sum_{x \in \bits^n} \psi_1(x) \chi_S(x) + \sum_{x \in \bits^n} \psi_2(x) \chi_S(x) = 0.\]
\end{proof}

\begin{proof}[Proof of Lemma \ref{lem:intermediate}]

Let $\zeta$ be the constant from Lemma \ref{lem:or-dual}, let $\rho$ be the constant from Lemma \ref{lem:maj-dual}, and let $\delta' = 1/2$. Set $K = 2(\rho N / \zeta)^{2/3}(\delta'/\delta)^{1/3}$. Let $d = \frac{1}{2}\rho^{1/3}\zeta^{2/3}(\delta')^{1/6}\delta^{1/3}N^{1/3} = \Omega(\delta^{1/3}N^{1/3})$, noting that $d \le \zeta (\delta/8)^{1/2} K^{1/2}$ and $d \le \rho (\delta')^{1/2}N / k$ for every $k \le K$. Let $\omega \colon \{1, \dots, K\} \to \R$, with correlation constant $\delta/8$, and $\eta_3, \dots, \eta_K \colon \{1, \dots, N\} \to \R$, with correlation constant $\delta'$, be as in the conclusions of those lemmas.

We start by defining a function $\Psi(m, k)$ as follows.
\[\Psi(m, k) = \omega(k) \cdot \mathbbm{1}_{m = N/2}- \frac{\omega(k)}{\eta_k(N/2)} \mathbbm{1}_{k \ge 3} \cdot \eta_k(m).\]

Here,
\[\mathbbm{1}_{m = N/2} = \begin{cases}
1 \quad \text{if } m = N/2, \\
0 \quad \text{otherwise,}
\end{cases} \quad\quad and \quad\quad
\mathbbm{1}_{k \ge 3} = \begin{cases}
1 \quad \text{if } k \ge 3, \\
0 \quad \text{otherwise.}
\end{cases}\]

We first show how to use $\Psi$ to construct the polynomial $\psi_1$. Analogously to our construction of $\phi$, we want $\psi_1$ to place a total weight of $\Psi(m, a)$ on each set $R_{m, a, 1}$. Recall from our overview in Section \ref{sec:overview} that we think of $\psi_1 = \psi' + \sum_{\text{invalid triples } (N/2, a, 1)} \psi''_{N/2, a, 1}$, where $\psi'$ looks like the simpler ``first stage'' dual polynomial $\phi$ from our informal overview (which we constructed in Section \ref{sec:firstlb}) and each $\psi''_{N/2, a, 1}$ cancels out the weight $\phi$ places on values of $k$ the do not divide $N$. 
This structure underlies our construction of $\Psi$, where we add multiples of the polynomials $\eta_k(m)$ to cancel out the weight $\omega(k)$ places on invalid triples.

Now we construct and analyze the polynomial $\psi_1$. Define 
\[\hat{\psi}_1(x) = \begin{cases} \Psi(m, k) / |R_{m, k, 1}| & \text{ if } x \in R_{m, k, 1} \setminus T_1\\ 
\Psi(N/2, 1) / |T_1| & \text{ if } x \in T_1\\
0& \text{ otherwise.}
\end{cases}\] 
Notice that $\hat{\psi}_1$ is well-defined, because any $x \not\in T_1$ is in $R_{m, k, 1}$ for at most one triple $(m, k, 1)$.
%
We collect several calculations with $\hat{\psi}_1$. First,
\[\sum_{x \in T_{1}} \hat{\psi}_1(x) = \Psi(N/2, 1) = \omega(1) > \frac{1 - \delta/8}{2},\]
\[-\sum_{x \in R_{N/2, 2, 1}} \hat{\psi}_1(x) = -\Psi(N/2, 2) = -\omega(2) > \frac{1 - \delta/8}{2},\]
and
\begin{align*}
\sum_{x \in B \setminus R_{N/2, 2, 1}} |\hat{\psi}_1(x)| &= \sum_{\{(m, k) \colon k | m\} \setminus \{(N/2, 1), (N/2, 2)\}} |\Psi(m, k)| \\
&= \sum_{k = 3}^K \left| \frac{\omega(k)}{\eta_k(N/2)} \right |\sum_{i = 1}^{\lfloor N/2k \rfloor} |\eta_k(2ki)|\\
&\le 4\sum_{k = 3}^K |\omega(k)| \\
&\le \frac{\delta}{2},
\end{align*}
where the penultimate inequality exploits Properties 2 and 3 of Lemma \ref{lem:maj-dual}, and the final inequality exploits Properties 1-3 of Lemma \ref{lem:or-dual}.

Noting that $|\omega(1)| + |\omega(2)| \leq 1$, it follows that $\|\hat{\psi}_1\|_1 \le 1 + \delta/2$. So setting $\psi_1 = \hat{\psi}_1/ \|\hat{\psi}_1\|_1$, it is immediate that $\psi_1$ satisfies the first three 
properties in the statement of the lemma. $\psi_1$ also satisfies the fourth property, since for any $x \in T_2$, $\psi_1(x) = \Psi(N, 2)/|T_2| = 0$.

Now we will show that $\hat{\psi}_1$, and hence $\psi_1$, has pure high degree at least $d$. We require two observations.
\begin{itemize}
\item $\Psi$ is supported on $(m, k)$ for which $k | m$. To see this, note first that
for any $k \geq 3$, $\Psi(N/2, k) = \omega(k) - \frac{\omega(k)}{\eta_k(N/2)} \cdot \eta_k(N/2) = 0$.
The claim now follows from Property 1 of Lemma \ref{lem:maj-dual}, combined with the fact that $2 | N$. 
 
\item $\Psi(m, 1)$ is nonzero only for $m = N/2$, and hence $\sum_{x \in T_1} \hat{\psi}_1(x) = \Psi(N/2, 1) = \sum_{m=1}^N \Psi(m, 1)$.
\end{itemize}

Fix any $S \subseteq [n]$ with $|S| \edit{\le} d$. Let $Q(m, k)$ be a polynomial of degree at most $d-1$
in each variable such that, for all pairs $(m, k)$ with $k | m$, $Q(m, k)=\E_{x \in R_{m, k, 1}}[\chi_S(x)]$. The existence of such a bivariate polynomial $Q$ is guaranteed by Lemma \ref{lem:symmetrization}.
Then the previous two observations together imply that:
\begin{align} \label{eq:nomoreseriously}
\sum_{x \in \bits^n}  \hat{\psi}_1(x) \chi_S(x) &= \sum_{m = 1}^N \sum_{k=1}^N \Psi(m, k) Q(m, k).
\end{align}
We remark that a key point is the derivation of \eqref{eq:nomoreseriously} is that we have no control over the evaluations $Q(m, k)$ when $k$ does
not divide $m$, yet this is rendered irrelevant because $\Psi(m, k) = 0$ for all such pairs. 

The right hand side of \eqref{eq:nomoreseriously} equals:
\begin{align}  \label{eq:soveryveryclose} 
\sum_{k=1}^K \omega(k) Q(N/2, k) - \sum_{k = 3}^K \frac{\omega(k)}{\eta_k(N/2)}\left(\eta_k(N/2) Q(N/2, k) + \sum_{i = 1}^{\lfloor N/2k \rfloor} \eta_k(2ik) Q(2ik, k)\right).
\end{align}

The first sum in \eqref{eq:soveryveryclose} is zero by Lemma \ref{lem:or-dual} since $Q(N/2, k)$ is a polynomial of degree at most \edit{$d$} in $k$. The second sum is also zero because for each fixed $k$, $Q(r, k)$ is a polynomial of degree at most \edit{$d$} in the variable $r$, and hence the term in parentheses is zero by Lemma \ref{lem:maj-dual} (Parts 1 and 4). Thus $\hat{\psi}_1$ has pure high degree at least $d$.

The construction of $\psi_2$ is similar. This time, we let
\[\hat{\psi}_2(x) = \begin{cases}
\Psi(m, k) / |R_{m, k, 2}| & \text{ if } x \in R_{m, k, 2} \setminus T_2 \\
\Psi(N/2, 2) / |T_2| & \text{ if } x \in T_2 \\
0 & \text{ otherwise.}
\end{cases}\]
Note that $\hat{\psi}_2$ is well-defined, because any $x \not\in T_2$ is in $R_{m, k, 2}$ for at most one triple $(m, k, 2)$. We define $\psi_2 = \hat{\psi}_2/ \|\hat{\psi}_2\|_1$.
Showing that $\psi_2$ satisfies Properties 1-4 of the lemma follows from the same calculations we used for $\psi_1$. 

To show that
$\psi_2$ has pure high degree at least $d$, we require the following additional observations.
\begin{itemize}
\item $\Psi$ is supported on pairs $(m, k)$ for which $k | m$ and $2 | (N-m)$. To see the latter property, note that if $\Psi(m, k) \neq 0$, then 
$m$ is even (\edit{this holds because $N/2$ is even, which follows from our requirement that $N$ is a multiple of 4}), and hence $N-m$ is as well.
\item  $\Psi(m, 2)$ is nonzero only for $m = N/2$. It follows that  $\sum_{x \in T_2} \hat{\psi}_2(x) = \Psi(N/2, 2) = \sum_{m=1}^N \Psi(m, 2)$.
\end{itemize}
With these observations in hand, showing that $\psi_2$ has pure high degree \edit{$d$} then follows from calculations analogous to the ones we used for $\psi_1$.

Finally, the fact that $\psi_1$ and $\psi_2$ satisfy Properties 6, 7\edit{, and 8} of the lemma follows from their definitions, combined with the fact that $R_{N/2, 1, 2} = R_{N/2, 2, 1}$.
In fact, $\sum_{x \in T_1} \psi_1(x)$ equals $\Psi(N/2, 1)$, while $\sum_{x \in R_{N/2, 2, 1}} \psi_2(x)$ also equals $\Psi(N/2, 1)$, giving Property 6. Similarly, 
$\sum_{x \in T_2} \psi_2(x)$ equals $\Psi(N/2, 2)$, while $\sum_{x \in R_{N/2, 1, 2}} \psi_1(x)$ also equals $\Psi(N/2, 1)$. This completes the proof.

\end{proof}

\section{A Dual Polynomial for Element Distinctness}
\label{sec:ed}

We first recall the reduction from Collision to Element Distinctness given in \cite{aaronsonshi}.\footnote{While the reduction given in Aaronson and Shi's paper is stated in terms of quantum query algorithms, it is straightforward to rephrase the reduction in terms of approximating polynomials instead.} The reduction shows how to turn a polynomial $p$ approximating $\ed_{M, R}$ into a polynomial $q$ approximating $\col_{N, R}$, with $N \approx M^2$ and $\deg q \le \deg p$.

We illustrate the reduction for $N = M^2/12$. Let $p : \{-1, 1\}^m \to \{-1, 1\}$ be an $(1/6)$-approximation of $\ed_{M, R}$, with $m = M\log R$. Define a polynomial $q:\bits^n \to \bits$ for $n = N\log R$ by
\[q(y_1, \dots, y_N) = \frac{1}{{N \choose M}} \sum_{1 \le i_1 < i_2 < \dots < i_M \le N} p(y_{i_1}, y_{i_2}, \dots, y_{i_M}).\]
That is, $q(y)$ is the expected value of $p(x)$ where $x$ is the concatenation of a random subset of $M$ of the blocks $y_1, \dots, y_{N}$. To simplify notation, for a set $S = \{i_1, i_2, \dots, i_M\}$, let $y|_S = (y_{i_1}, y_{i_2}, \dots, y_{i_M})$. Note that $\deg q \le \deg p$. Moreover, since $q$ is an average of values in $[-7/6, 7/6]$, it is always in $[-7/6, 7/6]$ itself. To finish arguing that $q$ is a $(1/3)$-approximation to $\col_{N, R}$, we consider two cases:

\begin{enumerate}
\item If $y \in T_1$, i.e., $y$ is a 1-to-1 input, then $y|_S$ is always 1-to-1. Hence $p(y|_S) \in [5/6, 7/6]$ for every subset of indices, so $q(y) \in [2/3, 4/3]$.
\item If $y \in T_2$, i.e., $y$ is a 2-to-1 input, then with high probability $y|_S$ is not 1-to-1. This follows from the ``birthday bound'':
\[\Pr_{|S| = M} [\ed(y|_S) = 1] \le \exp(-M^2/4N) \le \frac{1}{12}.\]
Therefore, $q(y) \le (11/12)(-5/6) + (1/12)(7/6) \le -2/3$.
\end{enumerate}

The construction we give in this section takes a dual view of the reduction above. Namely, we show how to transform a dual polynomial $\psi$ for $\col_{N, R}$ into a dual polynomial $\varphi$ for $\ed_{M, R}$, with $M^2 \approx N$. In the primal reduction, we constructed $q(y)$ from $p(x)$ by averaging $p$ over all subsets of size $M$. The right analogue in the dual reduction is to construct $\varphi(x)$ by averaging $\psi(y)$ over a carefully constructed set of \emph{extensions} from $x$ to a longer input $y$. In particular, $\varphi(x)$ averages $\psi(y)$ over all $y$ for which $x$ could have been produced by taking a subset of $M$ blocks of $y$.

We give this reduction formally below.

\begin{theorem}\label{thm:ed-reduction}
Let $\psi : \bits^n \to \bits$ be a $(1-\delta, d)$-dual polynomial for $\col_{N, R}$. Then $\psi$ can be used to construct $\varphi : \bits^m \to \bits$ that is an $(1-2\delta, d)$-dual polynomial for $\ed_{M, R}$ when $M \ge 2\sqrt{N \log (2/\delta)}$.
\end{theorem}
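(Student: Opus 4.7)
The plan is to dualize the primal reduction above: where the primal averages $p$ over subsets of $M$ blocks of a longer input $y$, the dual should average $\psi$ over longer extensions of $x$. Concretely, after normalizing $\psi$ so that $\|\psi\|_1 = 1$, define
\[\tilde\varphi(x) := \frac{1}{\binom{N}{M}} \sum_{T \in \binom{[N]}{M}} \sum_{y \in \bits^n : y|_T = x} \psi(y) = \sum_{y \in \bits^n} K(x,y)\,\psi(y),\]
where $K(x,y) := |\{T \in \binom{[N]}{M} : y|_T = x\}|/\binom{N}{M}$ is the probability that a uniformly random $M$-block restriction of $y$ equals $x$, and set $\varphi := \tilde\varphi / \|\tilde\varphi\|_1$. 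Since $\sum_x K(x,y) = 1$ for every $y$, the triangle inequality immediately yields $\|\tilde\varphi\|_1 \le \|\psi\|_1 = 1$.

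First I would verify that $\tilde\varphi$ inherits pure high degree $d$ from $\psi$. For any $S \subseteq [m]$ with $|S| \le d$ and each fixed $T$, the parity $\chi_S(y|_T)$ equals a parity $\chi_{S'}(y)$ on $|S'| = |S|$ coordinates of $y$. Swapping sums,
\[\sum_{x \in \bits^m} \tilde\varphi(x) \chi_S(x) = \frac{1}{\binom{N}{M}} \sum_T \sum_{y \in \bits^n} \psi(y) \chi_{S'}(y) = 0,\]
since each inner sum vanishes by the pure high degree of $\psi$.

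The main step is to show that $\tilde\varphi$ has correlation at least $1-2\delta$ with $\ed_{M,R}$. Writing
\[\sum_x \tilde\varphi(x) \ed_{M,R}(x) = \sum_{y \in \bits^n} \psi(y) \cdot \E_T[\ed_{M,R}(y|_T)],\]
I would split $y$ by membership in $T_1$, $T_2$, and $B$. For $y \in T_1$, $y|_T$ is $1$-to-$1$ for every $T$, contributing exactly $\sum_{y \in T_1} \psi(y)$. For $y \in B$, the contribution is at least $-\sum_{y \in B}|\psi(y)|$. For $y \in T_2$, the pairing structure of $g_y$ makes $y|_T$ be $1$-to-$1$ exactly when $T$ avoids choosing both endpoints of any pair; the standard birthday bound gives $\Pr_T[y|_T \text{ is } 1\text{-to-}1] \le \exp(-M^2/(4N)) \le \delta/2$ by the hypothesis on $M$. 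Expanding $\E_T[\ed_{M,R}(y|_T)] = -1 + 2\Pr_T[y|_T \text{ is } 1\text{-to-}1]$, the error term sums to at most $\delta \sum_{y \in T_2} |\psi(y)| \le \delta$ in absolute value. Combining the three cases,
\[\sum_x \tilde\varphi(x) \ed_{M,R}(x) \ge \sum_{y \in T_1} \psi(y) - \sum_{y \in T_2} \psi(y) - \sum_{y \in B} |\psi(y)| - \delta \ge (1-\delta) - \delta,\]
which, together with $\|\tilde\varphi\|_1 \le 1$, certifies that $\varphi$ is a $(1-2\delta, d)$-dual polynomial for $\ed_{M,R}$.

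The main obstacle is the $T_2$ case: $\psi$ need not have a fixed sign on $T_2$, so one cannot simply replace $\E_T[\ed_{M,R}(y|_T)]$ by $-1$ pointwise. The fix is precisely the decomposition above, combined with the birthday bound to control the error term uniformly for $y \in T_2$ and the $L_1$ normalization of $\psi$ to convert the cumulative error into an additive $\delta$ loss.
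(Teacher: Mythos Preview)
Your proposal is correct and follows essentially the same route as the paper: the same extension-averaging construction of $\tilde\varphi$, the same swap-of-sums argument for pure high degree, and the same case split over $T_1, T_2, B$ together with the birthday bound to control the $T_2$ error. The only cosmetic difference is that you bound the correlation and $\|\tilde\varphi\|_1$ separately (using $\|\tilde\varphi\|_1 \le \|\psi\|_1 = 1$) rather than chaining the inequalities as the paper does, but the content is identical.
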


\begin{corollary} \label{cor:ed_lower}
For any $\delta > 0$, there is an explicit $(1-\delta, d)$-dual polynomial for $\ed_{M, R}$ with $d = \Omega((\delta/\log(1/\delta))^{1/3}M^{2/3})$.
\end{corollary}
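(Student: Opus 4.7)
The plan is to obtain the corollary as a direct consequence of chaining together Theorem~\ref{thm:collision_main} with the reduction in Theorem~\ref{thm:ed-reduction}. The entire task reduces to choosing the parameters so that the final correlation and pure high degree bounds line up with the claimed statement.

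First, since Theorem~\ref{thm:ed-reduction} takes a $(1-\delta', d)$-dual polynomial for $\col_{N,R}$ and produces a $(1-2\delta', d)$-dual polynomial for $\ed_{M,R}$, I would set $\delta' = \delta/2$ so that the output correlation is exactly $1-\delta$. The reduction requires $M \ge 2\sqrt{N \log(2/\delta')} = 2\sqrt{N \log(4/\delta)}$, so I would take $N$ to be the largest multiple of $4$ satisfying $N \le M^2/(4\log(4/\delta))$; that is, $N = \Theta(M^2/\log(1/\delta))$, which makes the reduction's hypothesis hold.

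Next, I would invoke Theorem~\ref{thm:collision_main} with correlation parameter $\delta'' = \delta/12$ (so that $1-6\delta'' = 1-\delta/2 = 1-\delta'$), which gives an explicit $(1-\delta/2, d)$-dual polynomial for $\col_{N,R}$ with
\[
d = \Omega\bigl((\delta'')^{1/3} N^{1/3}\bigr) = \Omega\bigl(\delta^{1/3} N^{1/3}\bigr) = \Omega\!\left(\left(\frac{\delta}{\log(1/\delta)}\right)^{1/3} M^{2/3}\right),
\]
where the last equality substitutes our choice of $N$. Feeding this dual polynomial through Theorem~\ref{thm:ed-reduction} preserves the pure high degree and yields the desired $(1-\delta, d)$-dual polynomial for $\ed_{M,R}$.

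There is no real obstacle; the only thing worth double-checking is that the hypotheses of Theorem~\ref{thm:collision_main} (namely $N$ being a multiple of $4$ and $R \ge N$) and of Theorem~\ref{thm:ed-reduction} are simultaneously satisfiable with the chosen $N$, which is immediate for $M$ sufficiently large (and trivial otherwise, since the claimed degree bound becomes vacuous for $M$ below any absolute constant).
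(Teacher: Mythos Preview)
Your proposal is correct and matches the paper's intended argument: the corollary is stated immediately after Theorem~\ref{thm:ed-reduction} without a separate proof, precisely because it follows by plugging the dual polynomial from Theorem~\ref{thm:collision_main} (with suitably rescaled $\delta$) into Theorem~\ref{thm:ed-reduction} and choosing $N = \Theta(M^2/\log(1/\delta))$. Your parameter choices and the resulting calculation of $d$ are exactly what the paper has in mind.
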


\begin{remark}
The dependence of Corollary \ref{cor:ed_lower} on $\delta$ is essentially tight for $\delta = O(M^{-2})$. See Appendix \ref{app:tightness} for details.
\end{remark}

\begin{proof}[Proof of Theorem \ref{thm:ed-reduction}]

Given a set $S = \{i_1, \dots, i_M\} \subset [N]$ with $i_1 < i_2 < \dots < i_M$ and a bit string $y = (y_1, \dots, y_N) \in \bits^n$, define the restriction of $y$ to the set $S$, denoted by $y|_S \in \bits^m$, to be the string of length $m = M \log R$ obtained by concatenating the blocks $y_i$ for $i \in S$, i.e., $y|_S = (y_{i_1}, y_{i_2}, \dots, y_{i_M})$. Given a bit string $x \in \bits^m$, define the multiset of extensions of $x$, denoted by $\ext(x)$, to be the ${N \choose M} R^{N - M}$ strings $y \in \bits^n$ where $y|_S = x$ for some $|S| = M$. Restrictions and extensions are related by the equivalence of the multisets:
\[\{(x, y) : x \in \bits^{m}, y \in \ext(x)\} = \{(x, y) : y \in \bits^n, x = y|_S \text{ for some } |S| = m\}.\]

For $x \in \bits^m$, define the polynomial
\[\varphi(x) = \frac{1}{{N \choose M}}\sum_{y \in \ext(x)} \psi(y).\]
Let $\varphi(x) = 0$ for $x \notin \bits^m$. We claim that $\varphi$ is a good dual polynomial for the Element Distinctness function $\ed$, which requires us to show
\begin{enumerate}
\item $\sum_{x \in \{-1, 1\}^m} \varphi(x) \ed(x) > (1-2\delta)\cdot \sum_{x \in \{-1, 1\}^m} |\varphi(x)|$
\item $\sum_{x \in \{-1, 1\}^m} \varphi(x)\chi_S(x) = 0$ for all $|S| \le d$
\end{enumerate}

To verify the first property, define
\[A(y) = \frac{1}{{N \choose M}} \sum_{|S| = M} \ed(y |_S).\]
We collect a few observations about $A$.

\begin{enumerate}
\item $|A(y)| \le 1$ for all $y$.
\item If $y \in T_1$, then $A(y) = 1$.
\item If $y \in T_2$, then
\[\Pr_{|S| = M} [\ed(y|_S) = 1] \le \exp(-M^2/4N).\]
\eat{
\begin{align*}
\Pr_{|S| = M} [\ed(y|_S) = -1] &= \frac{{N/2 \choose M}2^M}{{N \choose M}} \\
&= \frac{(N/2)!(N-M)!2^M}{(N/2-M)!N!} \\
&= \frac{N \cdot (N-2) \cdot \dots \cdot (N - 2M + 2)}{N \cdot (N-1) \cdot \dots \cdot (N - M + 1)} \\
&\le \left(\frac{N - M}{N - M/2}\right)^{M/2} \\
&\le \exp\left(\frac{-M^2}{4N - 2M}\right) \\
&\le \exp(-M^2/4N).
\end{align*}
}
Hence,
\[A(y) \le -1 + 2\exp(-M^2/4N) \le -1 + \delta.\]
\end{enumerate}
Therefore we get
\begin{align*}
\sum_{x \in \bits^m} \varphi(x) \ed(x) &= \frac{1}{{N \choose M}} \sum_{x \in \bits^m} \sum_{y \in \ext(x)} \psi(y)\ed(x) \\
&= \frac{1}{{N \choose M}} \sum_{y \in \bits^n} \sum_{|S| = M} \psi(y)\ed(y|_S) \\
&= \sum_{y \in \bits^n} A(y) \psi(y) \\
&\ge \left(\sum_{y \in T_1} \psi(y) - \sum_{y \in T_2} \psi(y) - \sum_{y \in B} |\psi(y)|\right) - \delta \sum_{y \in T_2} |\psi(y)|\\
&\ge (1-2\delta) \sum_{y \in \bits^n} |\psi(y)| \\
&= (1-2\delta) \sum_{y \in \bits^n} \frac{1}{{N \choose M}} \sum_{|S| = M} |\psi(y)| \\
&\ge \frac{1-2\delta}{{N \choose M}}\sum_{x \in \bits^m} \sum_{y \in \ext(x)} |\psi(y)| \\
&\ge (1-2\delta) \sum_{x \in \bits^m} |\varphi(x)|.
\end{align*}

For the second property, let $T$ be a subset of $[N]$ with $|T| \le d$. Then
\begin{align*}
\sum_{x \in \bits^m} \varphi(x) \chi_T(x) &= \frac{1}{{N \choose M}} \sum_{x \in \bits^m} \sum_{y \in \ext(x)} \psi(y)\chi_T(x) \\
&= \frac{1}{{N \choose M}} \sum_{|S| = M} \sum_{y \in \bits^n} \psi(y)\chi_T(y|_S) \\
&= \frac{1}{{N \choose M}} \sum_{|S| = M} \sum_{y \in \bits^n} \psi(y)\chi_{T|_S}(y) \\
&= 0,
\end{align*}
where $T|_S$ denotes the subset of $T$ contained in the blocks specified by $S$.

\end{proof}

\section{On Complementary Slackness}
\label{sec:complementaryslackness}
Recalling that any bounded-error quantum query algorithm can be converted into an approximating polynomial \cite{beals}, the collision-finding algorithm of Brassard, H{\o}yer, and Tapp \cite{brassard} yields an explicit, asymptotically optimal approximating polynomial for $\col_{N, R}$. We describe this polynomial $p$ below. 

Recall that any approximating polynomial for $\col_{N, R}$ represents a feasible solution to the primal linear program considered in Section \ref{sec:prelims}. If the polynomial $p$ were an \emph{exactly} optimal $\eps$-approximation
for $\col_{N, R}$, then complementary slackness would imply that the optimal dual polynomial $\psi$ for $\col_{N, R}$ is supported on the points corresponding to constraints made tight by $p$. That is, $\psi:\bits^n \to \bits$ is supported on $x \in \bits^n$ for which $|p(x) - \col(x)| = \eps$. 
We refer to these as the \emph{maximum-error points} of $p$. 

While we do not know whether $p$ is an exactly optimal approximating polynomial for $\col_{N, R}$, we might still expect that an approximate version of complementary slackness might holds,
in the sense that a ``good'' dual polynomial should place all or most of its weight on points that are ``nearly'' maximum-error points of $p$. Indeed, this intuition has proven accurate for all of the dual polynomials
constructed in prior work, including for symmetric functions (see \cite[Section 4.5]{bt13}), block-composed functions (see \cite[Section 1.2.4]{thaler}), and the intersection of two majorities \cite{sherstovfocs}. 
Below, we argue that $k$-to-1 inputs are nearly maximum-error points for $p$, which explains why our dual polynomials for collision are supported on inputs that are roughly $k$-to-1, in addition to why these inputs play a prominent role in the original symmetrization-based proof. 

\paragraph{An asymptotically optimal approximation $p$ for $\col_{N, R}$.}
For a subset $S \subset [N]$, define $\cross_S \colon \{-1, 1\}^n \rightarrow \mathbb{R}$ via:
\[\cross_S(x_1, \dots, x_N) = |\{i \in S, j \notin S: x_i = x_j\}| = \sum_{i \in S, j \not\in S} \EQ(x_i, x_j), \]
where $\EQ$ denotes the equality function.
That is, $\cross_S(x)$ counts the number of cross-collisions between indices in $S$ and indices outside of $S$. 
Notice that $\EQ(x_i, x_j)$ is a function of only $2 \cdot \log R$ variables, and hence $\cross_S(x_1, \dots, x_N)$ is exactly computed by a polynomial of degree $2 \cdot \log R$.

In addition, for a subset $S \subset [N]$, define the function $\mathbb{I}_{\ED, S}(x_1, \dots, x_N)$ to be 1 if $x_i \neq x_j$ for all pairs $i, j \in S$ with $i \neq j$, and 0 otherwise.
That is, $\mathbb{I}_{\ED, S}$ indicates whether $x$ is 1-to-1 on the indices in $S$. Notice that $\mathbb{I}_{\ED, S}$ is a function of only $|S| \cdot \log R$ variables,
and hence is exactly computed by a polynomial of degree $|S| \cdot \log R$.

For the remainder of the discussion, let $r=N^{1/3}$ -- we focus on the quantity $\cross_S(x)$ when $|S| = r$. We will need the following simple observations.

\begin{enumerate}
\item If $x \in T_1$, i.e., $x$ is a 1-to-1 input, then $\cross_S(x) = 0$ and $\mathbb{I}_{\ED, S}(x) = 1$ for any $S$. 
\item If $x \in T_2$, i.e., $x$ is a 2-to-1 input, then $\mathbb{I}_{\ED, S}(x) = 1 \Longrightarrow \cross_S(x) = r$. 
\item If $x \in T_2$, then, over the random choice of $S$, $\mathbb{I}_{\ED, S}(x) = 0$ with probability at most $(N/2) \cdot (r/N)^2 \leq N^{-1/3}$.
\item For all $x \in \{-1, 1\}^n$, $\mathbb{I}_{\ED, S}(x) =1 \Longrightarrow \cross_S(x) \leq N-r$.
\end{enumerate}


Let $T_d: \mathbb{R} \rightarrow \R$ denote the degree-$d$ Chebyshev polynomial of the first kind. This polynomial has the following properties:
\begin{itemize}
\item $T_d(x) \in [-1, 1]$ for $x \in [-1, 1]$.
\item $T_d(1 + M/d^2) \ge 10$ for a constant $M$ independent of $d$.
\item The extreme points of $T_d$ in $[-1, 1]$ are the degree-$d$ \emph{Chebyshev nodes}, which take the form $\cos(i \pi/d)$ for $0 \leq i \leq d$.
\end{itemize}

Truncating the Taylor expansion of $\cos(x)=1-x^2/2+\dots$ after the quadratic term, one sees that the Chebyshev nodes are well-approximated via the expression $\cos(i \pi/d) \approx 1-(ci^2/d^2)$ for some constant $c$. 

Applying an appropriate affine transformation to $T_d$, we obtain a polynomial $A_d$ with the following properties:
\begin{itemize}
\item $A_d(0) = 1$.
\item $A_d(i) \in [-1, -3/4]$ for all real numbers $i \in [1, d^2/M]$.
\item $A_d(i) \in [-1, 1]$ for all real numbers $i \in [0, d^2/M]$. 
\item The extreme points of $A_d$ are well approximated by the points $c \cdot i^2$ for $i \in \{0,1, \dots, \lfloor d \cdot M^{-1/2}\rfloor\}$.
\end{itemize}

Let $p_S(x) = \mathbb{I}_{\ED, S}(x) \cdot A_d(\cross_S(x_1, \dots, x_N)/r)$ for $d = 100 \cdot M \cdot N^{1/3}$, and let 
$$p(x) = \E_{|S| = r}[p_S(x)] = \frac{1}{{N \choose r}} \sum_{|S| = r} p_S(x).$$
Then $p$ is a polynomial of degree $|S| \log R + 2 \cdot d  \cdot \log R = O(N^{1/3} \log R)$. We argue that $p$ approximates $\col_{N, R}$ to error $\eps$ for some $\eps \leq 1/3$.
The analysis falls into three cases.

\begin{itemize}
\item[Case 1:] For $x \in T_1$, $p_S(x) = A_d(0) = -1$ for all $S$, where the first equality follows from Property 1 above. So $p(x)=\E_{|S| = r}[p_S(x)] =1$. 
\item[Case 2:] For $x \in T_2$, $\mathbb{I}_{\ED, S}(x) = 1 \Longrightarrow p_S(x) = A_d(1) \in [-1, -3/4]$, where the equality follows from Property 2 above. 
Meanwhile,  $\mathbb{I}_{\ED, S}(x) \neq 1  \Longrightarrow p_S(x) = 0$. 
Combining these two facts with Property 3 above establishes that  $p(x)=\E_{|S| = r}[p_S(x)] \in [-1, -2/3]$.
\item[Case 3:] For $x \in \{-1, 1\}^n$, $p_S(x) \in [-1, 1]$. This follows from Property 4 above. 
\end{itemize}

\paragraph{Identifying maximum-error points of $p$.} 
For any fixed $S$, the maximum error points of $p_S$ are well-approximated by the $x \in \{1, 1\}^n$ for which the following two equations hold:
\begin{equation} \label{damnitendthis} \cross_S(x) = c\cdot i^2 \cdot r \text{ for some } i \in \{0,1, \dots, \lfloor d \cdot M^{-1/2}\rfloor\} \end{equation}
and
\begin{equation} \label{seriouslyman}
\mathbb{I}_{\ED, S}(x)=1.\end{equation}
(This follows from the fact that the extreme points of $A_d$ are roughly of the form $c \cdot i^2$ for $0 \le i \le d \cdot M^{-1/2}$).

However, the maximum-error points for the averaged polynomial $p(x) = \E_{|S|=r}[p_S(x)]$ are the points $x$ that satisfy \eqref{damnitendthis} and \eqref{seriouslyman} \emph{with high probability} over the choice of $S$. 
Indeed, for these points $x$, the error of $p(x)$ is at least $\eps \cdot (1-o(1)) \approx \eps$.

Consider any $k$ of the form $k=c\cdot i^2 + 1$ for some $i \in \{0, 1, \dots, \lfloor d \cdot M^{-1/2}\rfloor\}$, such that $k = o(N^{1/3})$. Consider any $x \in T_k$; we claim that $x$ satisfies \eqref{damnitendthis} and \eqref{seriouslyman} with probability $1-o(1)$
over choice of $S$. To see this, observe that the probability that
$\mathbb{I}_{\ED, S}(x_S)=0$ is at most $(N/k) \cdot k^2 \cdot \left(r/N\right)^2 = \frac{k \cdot r^2}{N} = o(1)$. 
 And if $\mathbb{I}_{\ED, S}(x_S)\neq 0$, then the number of cross-collisions is exactly
\[\cross_S(x_1, \dots, x_N) = r \cdot (k-1).\]
When $k$ takes the form $k=c\cdot i^2 + 1$, this means that $x$ satisfies \eqref{damnitendthis}. Hence, $x$ has nearly maximal error even for the averaged polynomial $p$.

\paragraph{Acknowledgements.} We are grateful to Guy Kindler, Yaoyun Shi, and Mario Szegedy for several illuminating discussions during the early stages of this work. We also thank Scott Aaronson and Emanuele Viola for helpful comments on an earlier version of this manuscript.

\bibliographystyle{plain}
\bibliography{collision}

\begin{thebibliography}{10}

\bibitem{aaronson}
Scott Aaronson.
\newblock Quantum lower bound for the collision problem.
\newblock In John~H. Reif, editor, {\em Proceedings on 34th Annual {ACM}
  Symposium on Theory of Computing, May 19-21, 2002, Montr{\'{e}}al,
  Qu{\'{e}}bec, Canada}, pages 635--642. {ACM}, 2002.

\bibitem{aaronsontutorial}
Scott Aaronson.
\newblock The polynomial method in quantum and classical computing.
\newblock In {\em 49th Annual {IEEE} Symposium on Foundations of Computer
  Science, {FOCS} 2008, October 25-28, 2008, Philadelphia, PA, {USA}}, page~3.
  {IEEE} Computer Society, 2008.

\bibitem{retrospective}
Scott Aaronson.
\newblock The collision lower bound after 12 years.
\newblock In {\em Qstart Conference}, 2013.

\bibitem{aaronsonshi}
Scott Aaronson and Yaoyun Shi.
\newblock Quantum lower bounds for the collision and the element distinctness
  problems.
\newblock {\em J. ACM}, 51(4):595--605, 2004.

\bibitem{adversary2}
Andris Ambainis.
\newblock Quantum lower bounds by quantum arguments.
\newblock {\em J. Comput. Syst. Sci.}, 64(4):750--767, 2002.

\bibitem{ambainis}
Andris Ambainis.
\newblock Polynomial degree and lower bounds in quantum complexity: Collision
  and element distinctness with small range.
\newblock {\em Theory of Computing}, 1(1):37--46, 2005.

\bibitem{qqcpolyseparation}
Andris Ambainis.
\newblock Polynomial degree vs. quantum query complexity.
\newblock {\em Journal of Computer and System Sciences}, 72(2):220 -- 238,
  2006.
\newblock \{JCSS\} \{FOCS\} 2003 Special Issue.

\bibitem{ambainised}
Andris Ambainis.
\newblock Quantum walk algorithm for element distinctness.
\newblock {\em {SIAM} J. Comput.}, 37(1):210--239, 2007.

\bibitem{adversary5}
Howard Barnum, Michael~E. Saks, and Mario Szegedy.
\newblock Quantum query complexity and semi-definite programming.
\newblock In {\em 18th Annual {IEEE} Conference on Computational Complexity
  (Complexity 2003), 7-10 July 2003, Aarhus, Denmark}, pages 179--193. {IEEE}
  Computer Society, 2003.

\bibitem{beals}
Robert Beals, Harry Buhrman, Richard Cleve, Michele Mosca, and Ronald de~Wolf.
\newblock Quantum lower bounds by polynomials.
\newblock {\em J. {ACM}}, 48(4):778--797, 2001.

\bibitem{beigelsurvey}
Richard Beigel.
\newblock The polynomial method in circuit complexity.
\newblock In {\em Proceedings of the Eigth Annual Structure in Complexity
  Theory Conference, San Diego, CA, USA, May 18-21, 1993}, pages 82--95. {IEEE}
  Computer Society, 1993.

\bibitem{beigel}
Richard Beigel.
\newblock Perceptrons, pp, and the polynomial hierarchy.
\newblock {\em Computational Complexity}, 4:339--349, 1994.

\bibitem{belovsrosmanis}
A.~{Belovs} and A.~{Rosmanis}.
\newblock {Adversary Lower Bounds for the Collision and the Set Equality
  Problems}.
\newblock {\em ArXiv e-prints}, October 2013.

\bibitem{belovsspalek}
Aleksandrs Belovs and Robert Spalek.
\newblock Adversary lower bound for the k-sum problem.
\newblock In Robert~D. Kleinberg, editor, {\em Innovations in Theoretical
  Computer Science, {ITCS} '13, Berkeley, CA, USA, January 9-12, 2013}, pages
  323--328. {ACM}, 2013.

\bibitem{brassard}
G.~Brassard, P.~H\o yer, and A.~Tapp.
\newblock Quantum algorithm for the collision problem.
\newblock {\em ACM SIGACT News (Cryptology Column)}, 28:14--19, 1997.
\newblock quant-ph/9705002.

\bibitem{bvdw}
Harry Buhrman, Nikolai~K. Vereshchagin, and Ronald de~Wolf.
\newblock On computation and communication with small bias.
\newblock In {\em 22nd Annual {IEEE} Conference on Computational Complexity
  {(CCC} 2007), 13-16 June 2007, San Diego, California, {USA}}, pages 24--32.
  {IEEE} Computer Society, 2007.

\bibitem{bt13}
Mark Bun and Justin Thaler.
\newblock Dual lower bounds for approximate degree and markov-bernstein
  inequalities.
\newblock In Fedor~V. Fomin, Rusins Freivalds, Marta~Z. Kwiatkowska, and David
  Peleg, editors, {\em ICALP (1)}, volume 7965 of {\em Lecture Notes in
  Computer Science}, pages 303--314. Springer, 2013.

\bibitem{bt14}
Mark Bun and Justin Thaler.
\newblock Hardness amplification and the approximate degree of constant-depth
  circuits.
\newblock {\em Electronic Colloquium on Computational Complexity {(ECCC)}},
  20:151, 2013.

\bibitem{ctuw}
Karthekeyan Chandrasekaran, Justin Thaler, Jonathan Ullman, and Andrew Wan.
\newblock Faster private release of marginals on small databases.
\newblock {\em CoRR}, abs/1304.3754, 2013.

\bibitem{adversary1}
Peter Hoyer, Troy Lee, and Robert Spalek.
\newblock Negative weights make adversaries stronger.
\newblock In {\em Proceedings of the Thirty-ninth Annual ACM Symposium on
  Theory of Computing}, STOC '07, pages 526--535, New York, NY, USA, 2007. ACM.

\bibitem{rootn}
Peter H{\o}yer, Michele Mosca, and Ronald de~Wolf.
\newblock Quantum search on bounded-error inputs.
\newblock In Jos C.~M. Baeten, Jan~Karel Lenstra, Joachim Parrow, and
  Gerhard~J. Woeginger, editors, {\em Automata, Languages and Programming, 30th
  International Colloquium, {ICALP} 2003, Eindhoven, The Netherlands, June 30 -
  July 4, 2003. Proceedings}, volume 2719 of {\em Lecture Notes in Computer
  Science}, pages 291--299. Springer, 2003.

\bibitem{agnostic}
Adam~Tauman Kalai, Adam~R. Klivans, Yishay Mansour, and Rocco~A. Servedio.
\newblock Agnostically learning halfspaces.
\newblock {\em {SIAM} J. Comput.}, 37(6):1777--1805, 2008.

\bibitem{reliable}
Varun Kanade and Justin Thaler.
\newblock Distribution-independent reliable learning.
\newblock In Maria{-}Florina Balcan and Csaba Szepesv{\'{a}}ri, editors, {\em
  Proceedings of The 27th Conference on Learning Theory, {COLT} 2014,
  Barcelona, Spain, June 13-15, 2014}, volume~35 of {\em {JMLR} Proceedings},
  pages 3--24. JMLR.org, 2014.

\bibitem{ksdnf}
Adam~R. Klivans and Rocco~A. Servedio.
\newblock Learning {DNF} in time
  2\({}^{\mbox{{\~{o}}(n\({}^{\mbox{1/3}}\))}}\).
\newblock {\em J. Comput. Syst. Sci.}, 68(2):303--318, 2004.

\bibitem{klivansservedioomb}
Adam~R. Klivans and Rocco~A. Servedio.
\newblock Toward attribute efficient learning of decision lists and parities.
\newblock {\em Journal of Machine Learning Research}, 7:587--602, 2006.

\bibitem{kutin}
Samuel Kutin.
\newblock Quantum lower bound for the collision problem with small range.
\newblock {\em Theory of Computing}, 1(1):29--36, 2005.

\bibitem{lecturenotes}
Scott~Aaronson lecturer.
\newblock The collision problem: Notes for lecture 13 of mit course 6.845:
  Quantum complexity theory, 2010.

\bibitem{nisanszegedy}
Noam Nisan and Mario Szegedy.
\newblock On the degree of boolean functions as real polynomials.
\newblock {\em Computational Complexity}, 4:301--313, 1994.

\bibitem{odonnellservedio}
Ryan O'Donnell and Rocco~A. Servedio.
\newblock New degree bounds for polynomial threshold functions.
\newblock {\em Combinatorica}, 30(3):327--358, 2010.

\bibitem{paturi}
Ramamohan Paturi.
\newblock On the degree of polynomials that approximate symmetric boolean
  functions (preliminary version).
\newblock In S.~Rao Kosaraju, Mike Fellows, Avi Wigderson, and John~A. Ellis,
  editors, {\em Proceedings of the 24th Annual {ACM} Symposium on Theory of
  Computing, May 4-6, 1992, Victoria, British Columbia, Canada}, pages
  468--474. {ACM}, 1992.

\bibitem{stt}
Rocco~A. Servedio, Li-Yang Tan, and Justin Thaler.
\newblock Attribute-efficient learning and weight-degree tradeoffs for
  polynomial threshold functions.
\newblock In Shie Mannor, Nathan Srebro, and Robert~C. Williamson, editors,
  {\em COLT}, volume~23 of {\em JMLR Proceedings}, pages 14.1--14.19. JMLR.org,
  2012.

\bibitem{sherstov14}
A.~A. Sherstov.
\newblock Breaking the {M}insky-{P}apert barrier for constant-depth circuits.
\newblock In {\em STOC}, 2014.

\bibitem{sherstovsurvey}
Alexander~A. Sherstov.
\newblock Communication lower bounds using dual polynomials.
\newblock {\em Bulletin of the {EATCS}}, 95:59--93, 2008.

\bibitem{sherstovmajmaj}
Alexander~A. Sherstov.
\newblock Separating ac\({}^{\mbox{0}}\) from depth-2 majority circuits.
\newblock {\em {SIAM} J. Comput.}, 38(6):2113--2129, 2009.

\bibitem{patternmatrix}
Alexander~A. Sherstov.
\newblock The pattern matrix method.
\newblock {\em {SIAM} J. Comput.}, 40(6):1969--2000, 2011.

\bibitem{comm5}
Alexander~A. Sherstov.
\newblock Strong direct product theorems for quantum communication and query
  complexity.
\newblock In Lance Fortnow and Salil~P. Vadhan, editors, {\em Proceedings of
  the 43rd {ACM} Symposium on Theory of Computing, {STOC} 2011, San Jose, CA,
  USA, 6-8 June 2011}, pages 41--50. {ACM}, 2011.

\bibitem{sherstov13}
Alexander~A. Sherstov.
\newblock Approximating the and-or tree.
\newblock {\em Theory of Computing}, 9(20):653--663, 2013.

\bibitem{sherstovfocs}
Alexander~A. Sherstov.
\newblock The intersection of two halfspaces has high threshold degree.
\newblock {\em {SIAM} J. Comput.}, 42(6):2329--2374, 2013.

\bibitem{shi}
Yaoyun Shi.
\newblock Quantum lower bounds for the collision and the element distinctness
  problems.
\newblock In {\em 43rd Symposium on Foundations of Computer Science {(FOCS}
  2002), 16-19 November 2002, Vancouver, BC, Canada, Proceedings}, pages
  513--519. {IEEE} Computer Society, 2002.

\bibitem{shizhu}
Yaoyun Shi and Yufan Zhu.
\newblock Quantum communication complexity of block-composed functions.
\newblock {\em Quantum Information {\&} Computation}, 9(5):444--460, 2009.

\bibitem{spalek}
Robert Spalek.
\newblock A dual polynomial for {OR}.
\newblock {\em CoRR}, abs/0803.4516, 2008.

\bibitem{adversary3}
Robert Spalek and Mario Szegedy.
\newblock All quantum adversary methods are equivalent.
\newblock {\em Theory of Computing}, 2(1):1--18, 2006.

\bibitem{thaler}
Justin Thaler.
\newblock Lower bounds for the approximate degree of block-composed functions.
\newblock {\em Electronic Colloquium on Computational Complexity {(ECCC)}},
  22:150, 2014.

\bibitem{tuv}
Justin Thaler, Jonathan Ullman, and Salil~P. Vadhan.
\newblock Faster algorithms for privately releasing marginals.
\newblock In Artur Czumaj, Kurt Mehlhorn, Andrew~M. Pitts, and Roger
  Wattenhofer, editors, {\em Automata, Languages, and Programming - 39th
  International Colloquium, {ICALP} 2012, Warwick, UK, July 9-13, 2012,
  Proceedings, Part {I}}, volume 7391 of {\em Lecture Notes in Computer
  Science}, pages 810--821. Springer, 2012.

\bibitem{yuen}
Henry Yuen.
\newblock A quantum lower bound for distinguishing random functions from random
  permutations.
\newblock {\em Quantum Information {\&} Computation}, 14(13-14):1089--1097,
  2014.

\bibitem{zhandry}
Mark Zhandry.
\newblock A note on the quantum collision and set equality problems.
\newblock {\em Quantum Information {\&} Computation}, 15(7{\&}8):557--567,
  2015.

\bibitem{adversary4}
Shengyu Zhang.
\newblock On the power of ambainis lower bounds.
\newblock {\em Theoretical Computer Science}, 339(2–3):241 -- 256, 2005.

\end{thebibliography}

\begin{appendix}
\section{On the Tightness of Theorem \ref{thm:collision_main} and Corollary \ref{cor:ed_lower}} \label{app:tightness}

To complement Theorem \ref{thm:collision_main}, we construct an approximating polynomial that gives a nearly matching upper bound on the approximate degree of $\col_{N, R}$. The construction is a refinement of the approximating polynomial given in Section \ref{sec:complementaryslackness}.

\begin{proposition}
For $0 \le \delta \le 1/N$, there exists a polynomial $p$ of degree $O(\delta^{1/3} N^{1/3} \log R)$ that $(1-\delta)$-approximates $\col_{N, R}$.
\end{proposition}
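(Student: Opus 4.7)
The plan is to adapt the explicit approximating polynomial from Section~\ref{sec:complementaryslackness} by rescaling its two internal parameters as functions of $\delta$. Set $r = \Theta((\delta N)^{1/3})$ for the size of the random subset $S\subset[N]$, and take the inner univariate polynomial $B$ to have degree $d = \Theta((\delta N)^{1/3})$. Since $\mathbb{I}_{\ED,S}$ has degree at most $r\log R$ and $\cross_S$ has degree at most $2\log R$, the averaged polynomial
\[p(x) \;=\; \E_{|S|=r}\bigl[\mathbb{I}_{\ED,S}(x)\cdot B(\cross_S(x))\bigr]\]
then has total degree at most $(r+2d)\log R = O(\delta^{1/3}N^{1/3}\log R)$, matching the claim.

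The key difference from Section~\ref{sec:complementaryslackness} is that the inner polynomial $B$ can be chosen weaker: it need only separate $B(0)$ from $B(r)$ by $\Omega(\delta)$ (rather than by a constant), while remaining bounded by $2-\delta$ in absolute value throughout $[0,N]$. I would take
\[B(y) \;=\; T_d(1 - 2y/N) - (1-\delta),\]
where $T_d$ is the degree-$d$ Chebyshev polynomial of the first kind. Since $|T_d|\le 1$ on $[-1,1]$, this immediately gives $|B(y)|\le 2-\delta$ on $[0,N]$, together with $B(0)=\delta$. Using $T_d(\cos\theta)=\cos(d\theta)$ together with $\arccos(1-\beta)=\sqrt{2\beta}(1+o(1))$ for small $\beta$, one finds $T_d(1-2r/N) = \cos(2d\sqrt{r/N})(1+o(1))$, which the choice $d=\kappa\sqrt{\delta N/r}$ (for a sufficiently large constant $\kappa$) drives to at most $1-C\delta$ for any prescribed $C>1$. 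Consequently $B(r)\le -(C-1)\delta$.

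The three defining properties of a $(1-\delta)$-approximation then follow from short calculations. For $x\in T_1$, we have $\mathbb{I}_{\ED,S}(x)=1$ and $\cross_S(x)=0$ deterministically, so $p(x)=B(0)=\delta$. For $x\in T_2$, a birthday bound gives $\Pr_S[\mathbb{I}_{\ED,S}(x)=0]\le r^2/(2N)=o(\delta)$ in the regime $\delta N\gg 1$, and whenever $\mathbb{I}_{\ED,S}(x)=1$ necessarily $\cross_S(x)=r$; hence $p(x)=(1-o(\delta))B(r)\le -\delta$ once $C$ is chosen large enough. For arbitrary $x\in\bits^n$, either $\mathbb{I}_{\ED,S}(x)=0$ (making $p_S(x)=0$) or $\cross_S(x)\in[0,N-r]$ where $|B|\le 2-\delta$; in either case $|p_S(x)|\le 2-\delta$, and so $|p(x)|\le 2-\delta$ by averaging.

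The main point where care is needed is balancing three competing requirements simultaneously: the birthday-failure probability $r^2/(2N)$ must be $o(\delta)$; the Chebyshev gap $|B(0)-B(r)|$ must exceed $2\delta$, which via the Taylor expansion $T_d(1-2r/N)\approx 1-2d^2r/N$ forces $d\ge\Omega(\sqrt{\delta N/r})$; and the total degree $(r+2d)\log R$ must be minimized. The symmetric choice $r=d=\Theta((\delta N)^{1/3})$ satisfies all three within the cube-root budget, and the affine shift by $1-\delta$ in $B$ keeps its envelope inside $[-(2-\delta),\delta]\subseteq[-(2-\delta),2-\delta]$ throughout $[0,N]$. The degenerate regime $\delta N\le 1$, in which $(\delta N)^{1/3}\le 1$, is handled separately by the simpler template $r=d=1$ and the linear $B(y)=\delta-2\delta y$: here $\cross_S(x)\in\{0,1\}$ on $T_1\cup T_2$, and $\delta N\le 1$ keeps $|B(y)|\le 2-\delta$ on $[0,N]$.
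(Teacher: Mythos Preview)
Your proposal is correct and takes essentially the same approach as the paper: both adapt the Section~\ref{sec:complementaryslackness} construction by rescaling the subset size $r$ and the inner Chebyshev degree $d$ to $\Theta((\delta N)^{1/3})$, using a shifted Chebyshev polynomial whose gap between the values at $0$ and $1$ (respectively $r$) is only $\Theta(\delta)$ rather than a constant. Your write-up is somewhat more explicit than the paper's sketch---you give the concrete affine shift $B(y)=T_d(1-2y/N)-(1-\delta)$ and separately handle the degenerate regime $\delta N\le 1$ with the linear choice $r=d=1$, $B(y)=\delta-2\delta y$---but the underlying idea is identical.
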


\begin{proof}[Proof sketch]
See Section \ref{sec:complementaryslackness} for the construction of an approximating polynomial of degree $O(N^{1/3} \log R)$ in the case where $\delta$ is constant. In order to obtain an improved upper bound for vanishing $\delta$, we make the following changes to that construction:
\begin{enumerate}
\item We instead choose $r = \delta^{1/3}N^{1/3}$. Now if $x$ is a 2-1 input, the probability over the random choice of the set $S$ of obtaining a collision inside $S$, i.e. the probability that $\mathbb{I}_{\ED, S} = 0$, is at most $(N/2) \cdot (r/N)^2 \le \delta / 2$.
\item We instead let $A_d$ be an affine transformation of a Chebyshev polynomial with the following properties for some constant $M$:

\begin{itemize}
\item $A_d(0) \ge \frac{\delta}{2}$
\item $A_d(i) \in [-1, -\frac{\delta}{2}]$ for $i \in[1, d^2/M\delta]$
\item $A_d(i) \in [-1, 1]$ for $x \in [0, d^2/M\delta]$. 
\end{itemize}

\item Setting $d = 100 \cdot M \cdot r$ ensures that the polynomial $p$ has degree $O(\delta^{1/3} N^{1/3} \log R)$ and is a $(1-\delta)$-approximation of $\col_{N, R}$.
\end{enumerate}

\end{proof}

We now show that Corollary \ref{cor:ed_lower} is tight up to a factor of $\log R$, when $\delta \le 1/M^2$. This gives mild evidence that the lower bound has the right dependence on both parameters $M, \delta$ for vanishing $\delta$.

\begin{proposition}
Let $\delta \le 1/M^2$. Then there exists a $(1-\delta)$-approximating polynomial for $\ED_{M, R}$ with degree $O(\log R)$.
\end{proposition}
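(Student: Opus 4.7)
The plan is to exploit the very weak approximation requirement: since the allowed error is $1-\delta$ with $\delta \le 1/M^2$, the approximating polynomial $p$ only needs to exhibit a tiny additive bias of $\pm\delta$ between the classes $T_1$ and $\{-1,1\}^m \setminus T_1$, while being globally bounded in absolute value by $2-\delta$. This slack is enough that a simple affine function of the degree-$(2\log R)$ collision counter does the job, with no Chebyshev amplification needed.

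Concretely, first I would recall that for any pair of blocks $x_i, x_j \in \{-1,1\}^{\log R}$, the indicator $\EQ(x_i, x_j)$ of $g_x(i) = g_x(j)$ is a $\{0,1\}$-valued function of $2\log R$ variables, and is therefore computed exactly by a multilinear polynomial of degree $2\log R$. Summing over all pairs gives
\[ C(x) \;=\; \sum_{1 \le i < j \le M} \EQ(x_i, x_j), \]
a polynomial of degree $2\log R$ taking integer values in $\{0, 1, \dots, \binom{M}{2}\}$. The key (trivial) observation is that $\ED_{M,R}(x) = 1$ iff $C(x) = 0$, while $C(x) \ge 1$ whenever $\ED_{M,R}(x) = -1$.

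Next, I would define
\[ p(x) \;=\; \delta \cdot \bigl(1 - 2\,C(x)\bigr), \]
which has degree $2\log R = O(\log R)$ as required. The verification splits into two cases. If $x \in T_1$, then $C(x) = 0$, so $p(x) = \delta$ and $|p(x) - \ED_{M,R}(x)| = |\delta - 1| = 1 - \delta$. If $x \notin T_1$, then letting $k = C(x) \ge 1$ we have $p(x) = \delta(1-2k) \le -\delta$, hence
\[ |p(x) - \ED_{M,R}(x)| \;=\; 1 + p(x) \;=\; 1 - \delta(2k-1) \;\le\; 1 - \delta. \]
Finally, for the global boundedness requirement $|p(x)| \le 1 + (1-\delta) = 2-\delta$, the extreme value of $|p|$ is attained at $C(x) = \binom{M}{2}$, where $|p(x)| \le \delta\bigl(2\binom{M}{2}-1\bigr) \le \delta M^2 \le 1 < 2-\delta$, using the hypothesis $\delta \le 1/M^2$.

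There is no real technical obstacle; the only conceptual point is to notice that one does not need a sharp threshold-like behavior around $C(x)=0$ (which would require a Chebyshev construction and hence extra degree factors depending on $M$). Because $\delta$ is so small, a linear scaling of the exactly-representable collision count $C(x)$, tuned so that the per-collision slope overwhelms the required $\pm\delta$ gap while $\delta\binom{M}{2}$ remains below $1$, is already a valid $(1-\delta)$-approximation. This yields degree $O(\log R)$ independent of $M$, matching Corollary~\ref{cor:ed_lower} up to the $\log R$ factor in the range exactly when $\delta \le 1/M^2$.
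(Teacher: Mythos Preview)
Your proof is correct and follows essentially the same approach as the paper: both construct an affine function of the exactly-computable pairwise-equality count, which has degree $O(\log R)$, and observe that the hypothesis $\delta \le 1/M^2$ keeps the polynomial globally bounded while producing the required $\pm\delta$ gap between $T_1$ and its complement. Your normalization $p(x)=\delta(1-2C(x))$ differs only cosmetically from the paper's $\tfrac{1}{\binom{M}{2}}\bigl(\tfrac12 - \sum_{i\ne j}\NEQ(x_i,x_j)\bigr)$; the paper fixes the scaling to achieve error $1-1/M^2$ (which suffices for all $\delta \le 1/M^2$), whereas you scale by $\delta$ directly, but the underlying idea is identical.
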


\begin{proof}
We write
\[\ED_{M, R}(x_1, \dots, x_M) =  \bigwedge_{i \ne j} \NEQ(x_i, x_j),\]
where $\NEQ(x_i, x_j) = 1$ if $i$ and inputs $j$ are distinct, and is zero otherwise. The function $\NEQ$ can be computed exactly by a polynomial of degree $O(\log R)$. Therefore, the polynomial
\[\frac{1}{{M \choose 2}} \left( \frac{1}{2} - \sum_{i \ne j} \NEQ(x_i, x_j) \right )\]
has degree $O(\log R)$ and approximates $\ED_{M, R}$ to within error $1-1/M^2$.
\end{proof}

\end{appendix}

\end{document}